\renewcommand{\paragraph}{%
  \@startsection{paragraph}{4}%
  {\z@}{1.25ex \@plus 1ex \@minus .2ex}{-1em}%
  {\normalfont\normalsize\bfseries}%
}
\let\originalleft\left
\let\originalright\right
\renewcommand{\left}{\mathopen{}\mathclose\bgroup\originalleft}
\renewcommand{\right}{\aftergroup\egroup\originalright}
\algrenewcommand\alglinenumber[1]{\normalsize #1.} 
\newcounter{algsubstate}
\definecolor{darkmagenta}{rgb}{0.85, 0, 0.45}
\newcommand{\ket}[1]{\left| #1 \right>}
\newcommand{\bra}[1]{\left< #1 \right|}
\newcommand{\ketbra}[2]{\ket{#1} \!\! \bra{#2}}
\newcommand{\pure}[1]{\ketbra{#1}{#1}}
\newcommand{\defvar}{\coloneqq} 
\newcommand{\eps}{\epsilon}
\newcommand{\idmap}{\mathcal{I}} 
\newcommand{\norm}[1]{\left\lVert#1\right\rVert} 
\newcommand{\suchthat}{\text{ s.t.}} 
\newcommand{\term}[1]{\textbf{#1}}
\newcommand{\esecure}{\eps^\mathrm{QKD}}
\newcommand{\ecert}{\eps^\mathrm{cert}}
\newcommand{\etotal}{\eps^\mathrm{total}}
\newcommand{\symvec}[1]{\boldsymbol{#1}} 
\newcommand{\rv}[1]{\mathtt{#1}} 
\newcommand{\Iconf}{\rv{I}^\mathrm{conf}} 
\newcommand{\regconf}{\rv{S}^\mathrm{conf}} 
\newcommand{\KA}{K^A} 
\newcommand{\KB}{K^B} 
\newcommand{\Eini}{Q^\mathrm{initial}}
\newcommand{\Efin}{E^\mathrm{final}}
\newcommand{\atk}{\mathscr{A}} 
\newcommand{\mdl}{\mathscr{M}} 
\newcommand{\universe}{\mathfrak{U}_\mathrm{models}} 
\newcommand{\channA}{\mathcal{E}^{\atk}} 
\newcommand{\channAM}{\mathcal{E}^{\atk,\mdl}} 
\newcommand{\channF}{\mathcal{F}} 
\newcommand{\channAMF}{\overline{\mathcal{E}}^{\atk,\mdl}} 
\newcommand{\Nmax}{N_{\mathrm{max}}} 
\newcommand{\Nparam}{N_{\mathrm{param}}} 
\newcommand{\param}{\mu}
\newcommand{\robustset}{S_\mathrm{robust}} 
\newcommand{\nChar}{n_{\textrm{char}}} 
\newcommand{\xdark}{\param_\mathrm{dark}}
\newcommand{\Idark}{[\xdark^\mathrm{low},\xdark^\mathrm{upp}]}
\newtheorem{remark}{Remark}[section]
\newtheorem{proposition}{Proposition}[section]
\newtheorem{definition}{Definition}[section]
\newtheorem{condition}{Condition}[section]
\newcommand{\markup}[1]{{\color{black}#1}}
\begin{document}

\title{\textbf{Incorporating device characterization into security proofs}}
\renewcommand\Affilfont{\itshape\small} 

\author[1]{Ernest Y.-Z.\ Tan}
\author[1]{Shlok Nahar}
\affil[1]{Institute for Quantum Computing and Department of Physics and Astronomy, University of Waterloo, Waterloo, Ontario N2L 3G1, Canada.}

\date{}

\maketitle

\begin{abstract}
Typical security proofs for quantum key distribution (QKD) rely on having some model for the devices, with the security guarantees implicitly relying on the values of various parameters of the model, such as dark count rates or detector efficiencies. Hence to deploy QKD in practice, we must establish how to certify or characterize the model parameters of a manufacturer's QKD devices. We present a rigorous framework for analyzing such procedures, laying out concrete requirements for both the security proofs and the certification or characterization procedures. In doing so, we describe various forms of conclusions that can and cannot be validly drawn from such procedures, addressing some potential misconceptions. We also discuss connections to composable security frameworks and some technical aspects that remain to be resolved in that direction.
\end{abstract}

\section{Introduction}

Quantum cryptography is a promising prospect for applications of quantum technology, allowing for the realization of cryptographic tasks that would be classically impossible. For instance, a prominent form of quantum cryptography is quantum key distribution (QKD), which distributes secret keys with information-theoretic security between two honest parties Alice and Bob, even in the presence of an eavesdropper Eve. Accordingly, security proofs have been developed for a wide variety of protocols in quantum cryptography, aiming to ensure that they can indeed be safely deployed. For this work, we shall focus our discussion mainly on QKD as a central example, though we note that the overall principles should broadly generalize to many other classes of protocols.

Importantly, such security proofs are only valid for particular classes of models for the device used to implement the corresponding protocol. (For QKD protocols in particular, it might be more accurate to refer to a \emph{pair} of devices held by Alice and Bob, but for brevity in this work we shall often simply refer to it as a single device.)
Given this, we should account for the fact that real-life devices would usually have various imperfections as compared to some idealized ``perfect'' behaviour. If these imperfections are sufficiently extreme, lying outside the class of device models covered by the security proof, they might even cause a protocol performed with such a device to be completely insecure. As an example, consider a decoy-state QKD protocol with an imperfectly phase-randomized laser as described in \cite[Eq. (5)]{NUL23}, focusing on the high-loss regime.
If the phase randomization is perfect or sufficiently close to perfect, the protocol would securely yield non-zero key rate even in the high-loss regime.
However, in the ``maximally imperfect'' phase-coherent case, Alice sends out linearly independent states, thus rendering the protocol completely insecure against an unambiguous state discrimination attack at high loss. 

Hence it is important to introduce some process by which we could try to prevent deployment of QKD devices that do not satisfy the model assumptions required for the security proofs to hold.
More specifically, we could aim to achieve this via some notion of a \term{certification procedure}, which either approves the device for use in QKD protocols, or otherwise rejects the device.\footnote{We choose to use the terms ``approve''/``reject'' in order to avoid confusion with the ``accept''/``abort'' decisions in QKD protocols themselves.} In order for this certification procedure to be practical, we would want to phrase it in terms of a tractable number of \term{parameters} that capture ``all relevant device properties'' for a QKD security proof. (We discuss various considerations for choosing these parameters in the main body of this work.) Informally, we expect that the certification procedure should bound the values of these parameters, and then accept or reject the device based on whether these bounds seem ``good enough'' for us to prove the resulting QKD protocols are secure. While this has some similarities to the notions of ``parameter estimation'' or ``acceptance testing'' in QKD protocols~\cite{rennerthesis}, the conceptual distinction here is that it involves the device that will be used by the honest parties Alice and Bob, rather than the operations that Eve performs on states transmitted during the QKD protocol. 

\subsection{Fundamental obstacles in formalizing certification properties}

If a certification procedure could \emph{guarantee} a bound on some parameter with ``no probabilistic uncertainty'' (put another way, if it gives a bound that holds with probability $1$), then there would be no issues with simply saying that the certification procedure only approves a device if the bound on that parameter falls within some range for which we can prove security of the resulting QKD protocols. 
Unfortunately, demanding such a guarantee is usually too stringent a requirement when considering most types of device parameters, especially in the context of quantum devices that have some ``inherent'' random behaviour. 

More realistically, one would usually only be able to make the weaker statement of constructing a \term{confidence interval} for some parameter, which contains the true value of the parameter with high probability (we defer a formal discussion of confidence intervals to the main body of this work). 
In such circumstances, it is still reasonable to suppose that the certification procedure approves the device whenever the confidence interval falls within some suitable range of values.
Critically, however, here there is some small but nonzero probability that the confidence interval fails to correctly capture the true parameter value, and a full security framework should rigorously account for this fact in some form.

It is tempting to suggest that in the presence of such probabilistic uncertainty, one could simply claim that \emph{conditioned} on the certification procedure accepting, with high probability the QKD protocols using the accepted device will be secure. Unfortunately, such a claim is fundamentally impossible in general --- a very similar issue arises regarding the standard security criterion for QKD, when discussing the states conditioned on the protocol accepting; see~\cite[Sec.~4.3]{arx_TTN+25}. To see this, consider the case where the device undergoing the certification procedure is genuinely insecure (in the sense that any QKD keys it produces can be perfectly guessed by Eve), for instance the aforementioned example of a laser emitting fully phase-coherent states. Due to the probabilistic nature of the certification procedure, there will be a nonzero probability of approving such a device --- for a well-designed certification procedure, this probability would presumably be extremely small, but it is not literally zero. Importantly, the device is \emph{still} fundamentally insecure even when conditioned on the certification procedure approving; it is not valid to claim that it is ``secure with high probability conditioned on approval''.

Put another way, such a claim would be a standard mistake regarding conflation of conditional probabilities: even given that the probability of approval conditioned on the device being insecure is small, this does \emph{not} imply that the probability of the device being insecure conditioned on approval is small. Connecting these probabilities rigorously would require an explicit choice of a \term{Bayesian prior} regarding the device behaviour, which does not feature within the usual frameworks of quantum cryptography.

\subsection{Our contributions}

In light of the above, a more careful treatment of probabilistic uncertainty in the certification procedure is necessary. This is the main purpose of this work: we lay out a framework via which this issue can be properly addressed, and establish rigorous statements regarding the final states produced when a QKD device is subject to such a certification procedure followed by multiple QKD protocol instances (in the event of approval). In the process, we obtain clear guidelines for what is required from both the certification procedures and the QKD security proofs, which we informally summarize as follows:
\begin{enumerate}
\item The certification procedure should construct confidence intervals for all parameters ``relevant to'' a security proof, and reject the device if any of these confidence intervals are not contained within some pre-designated ranges.
\item The security proofs should have some ``robustness'' to imperfections, in the sense that they must ensure security holds for all devices with parameters lying within some nontrivial ranges, rather than only idealized devices with some ``perfect'' values for those parameters.
\end{enumerate}
We give a more detailed and rigorous description of these criteria in the main body of this work, particularly Proposition~\ref{prop:mainbound} and Sec.~\ref{sec:certify}--\ref{sec:securityproof}.
As long as these criteria are satisfied, we can obtain useful conclusions about the states produced by the QKD protocol instances, though we re-emphasize that due to the fundamental obstacles mentioned above, it is  \emph{not} valid to suppose they yield statements of the form (for instance) ``conditioned on the certification procedure approving, the probability that Eve can guess the final keys is small''. 

This paper is structured as follows. In Sec.~\ref{sec:framework}, we lay out our framework for describing certification procedures and QKD protocols, and present our main result (Proposition~\ref{prop:mainbound}) regarding rigorous statements that can be made for the states produced at the end of such a process. In Sec.~\ref{sec:certify}, we discuss what this implies in terms of what is required in the certification procedure, and conversely, in Sec.~\ref{sec:securityproof} we discuss what is required from the QKD security proofs. \markup{In Sec.~\ref{sec:concreteExample}, we provide a pedagogical example that illustrates the application of our work to characterizing dark count rates in a QKD detection setup.} Finally, in Sec.~\ref{sec:conclusion} we summarize our findings. Some technical details and variants are deferred to the appendices, such as the possibilities of ``adaptive'' protocol choices based on characterizing device parameters (Appendix~\ref{app:varproof}) and connections to the framework of Abstract Cryptography (Appendix~\ref{app:AC}). 

\section{Proposed framework}
\label{sec:framework}

We will use $d(\rho,\sigma) \defvar \frac{1}{2}\norm{\rho-\sigma}_1$ to denote the \term{trace distance} between any two states $\rho,\sigma$, as described in e.g.~\cite{NC10}. 
All states should be understood to be normalized states, including when conditioned on some event. 
In this work, to denote a quantum channel $\mathcal{E}$ from register $Q$ to register $Q'$, we will often use the abbreviated notation
$\mathcal{E}: Q \to Q'$
rather than writing out the formal statement of it being a completely positive trace-preserving map~\cite{NC10} from states on $Q$ to states on $Q'$. 
We will also leave tensor products with identity channels implicit; e.g.~for a channel $\mathcal{E}:Q\to Q'$, we often use the compact notation
$\mathcal{E}[\rho_{QR}] \defvar (\mathcal{E} \otimes \idmap_R)[\rho_{QR}]$ where $\idmap_R$ is an identity channel on $R$.

\subsection{A prelude: existing QKD security definition}

Currently, many QKD security proofs are centered around showing that a particular trace-distance security definition holds, as discussed in e.g.~\cite{PR22,BHL+05,HT12}. This security definition can be formulated in a number of equivalent ways; here we shall present a formulation convenient for our work. 

Specifically, first suppose at the end of a QKD protocol, let $\KA$ and $\KB$ denote the final keys obtained by Alice and Bob respectively, and let $\Efin$ denote all registers that may potentially be available to an adversary in the future. We shall allow the length of the final keys $\KA \KB$ to be picked during the protocol, i.e.~we allow for what have been called \term{variable-length} or \term{adaptive-length} protocols~\cite{PR22,BHL+05,HT12}. While many existing QKD proofs have instead focused on \term{fixed-length} protocols that simply make an accept/abort decision and output keys of fixed length when they accept, these can be viewed as just a special case of variable-length protocols by viewing the abort outcome as  a key of zero length.

Given the above register notation, we moreover need a formalism to describe possible attacks by Eve. For the purposes of this work, we shall take a fairly ``minimalist'' perspective that should apply to a broad range of formalisms --- we only require that for any particular attack $\atk$, there exists a corresponding quantum channel $\channA:\Eini \to \KA \KB \Efin$ describing how that attack transforms any state on some initial register $\Eini$, describing ``all relevant registers'' at the start of the protocol, to some final state on $\KA \KB \Efin$, where $\Efin$ denotes all registers Eve holds at the end of the protocol. We emphasize we do not place any restrictions on the ``type of mathematical object'' $\atk$ is formalized as, leaving it up to the relevant use case (it could for instance be a channel, a quantum comb~\cite{CDP09}, a {causal box}~\cite{PMM+17}, or perhaps some even more general structure), other than this fairly minimal requirement. {In particular, we are not necessarily \emph{identifying} the attacks $\atk$ with the channels $\channA$, in that e.g.~we do not require the mapping $\atk \to \channA$ to be bijective --- this allows $\atk$ to potentially be an element of some much ``larger'' set or class than the set of quantum channels.} Note that the channel $\channA$ is intended to describe ``all physical processes'' during the protocol, including actions by honest parties such as state preparations and measurements for prepare-and-measure protocols.

With this in mind, the standard trace-distance security definition can be phrased as saying that a QKD protocol is $\esecure$-secure (also sometimes called $\esecure$-sound~\cite{PR22}) if for any attack $\atk$ and any initial state $\sigma$, we have
\begin{align}\label{eq:standarddefn}
d\left( \channA\left[\sigma_{\Eini}\right] \,,\, \mathcal{K} \circ \channA\left[\sigma_{\Eini}\right] \right) \leq \esecure,
\end{align}
where $\mathcal{K}$ is a channel that reads the length $\ell$ of the keys on registers $\KA \KB$ and replaces them with perfectly correlated uniform keys \emph{of that length}, i.e.~the state 
\begin{align}
\omega^\ell_{\KA\KB} \defvar \sum_{k\in\{0,1\}^\ell} 2^{-\ell} \ketbra{kk}{kk}_{\KA\KB}.
\end{align}
The above security definition is sometimes presented with an additional purifying register than the channel $\channA$ does not act on, viewing this additional purification as being held by Eve. For the purposes of this work, however, we shall view any such purification as implicitly being included in the domain of the channel $\channA$, simply requiring that $\channA$ acts as the identity on that register.

\begin{remark}\label{remark:misconception}
Informally, the state $\mathcal{K} \circ \channA\left[\sigma_{\Eini}\right]$ in Eq.~\eqref{eq:standarddefn} is a somewhat ``ideal'' state, and the security criterion states that the real state produced at the end of the protocol is $\esecure$-close to this ``ideal'' state. To avoid possible misconceptions, we briefly emphasize that this ``ideal'' state is \emph{not} of the form $\sum_{k\in\{0,1\}^\ell} 2^{-\ell} \ketbra{kk}{kk}_{\KA\KB} \otimes \rho^{(\ell)}_{\Efin}$ for any specific length $\ell$ (and some state $\rho^{(\ell)}_{\Efin}$), but rather it is a mixture of such states, with the weights in the mixture being generally \emph{unknown} (as they depend on Eve's arbitrary attack). In particular, this means that for fixed-length protocols, it would \emph{not} be correct to say that the real state conditioned on accepting is $\esecure$-close to a state of that form, since the security definition only involves a mixture of the states conditioned on accepting and aborting. We emphasize however that as discussed in~\cite{PR22}, in the context of the standard QKD assumptions, this definition still suffices to yield composable security.
\end{remark}

We now turn to the question of how to slightly modify this security definition in the context of device imperfections and certification. 

\subsection{Security with respect to device models}

In our above discussion of the security definition, in order to obtain a channel $\channA$ from the attack $\atk$, we in fact implicitly needed to rely on having some model for the QKD device used in the protocol. This works well enough under a ``basic'' QKD security framework where the device (or at least all its ``relevant'' properties) is assumed to be exactly characterized, but in the presence of device imperfections or inexact characterization, it is helpful to make this dependence on the device model more explicit. Hence we shall now slightly rephrase the security definition in a manner more suited for this context, which we shall use for this work.

Consider a QKD device that is about to undergo a certification procedure, intended to certify it for use in up to $\Nmax$ instances of some QKD protocol (after which it is discarded, or sent for re-certification).\footnote{More generally, one might want to consider the possibility of some form of ``mid-lifespan'' certification of the QKD device. However, we leave a detailed analysis of this to be addressed in future work --- for this work, we shall simply take the perspective that this could potentially be viewed as a ``fresh'' certification of the device for another number of QKD protocol instances.} We shall suppose that the device behaviour across all those $\Nmax$ instances is described by some \term{device model}, which we shall denote as $\mdl$. The device model $\mdl$ is intended to fully capture the behaviour of the device throughout those $\Nmax$ protocol instances, in the following sense. Suppose that at the end of each instance, we let $\KA_j \KB_j$ denote the keys produced by that instance, and we let $E_j$ denote \emph{all other registers} that might possibly be accessible to Eve in any form in the future (there are some subtleties in interpreting this, which we defer to Remark~\ref{remark:memory} later).  Similar to the earlier discussion, let us use $\atk$ to denote some arbitrary attack by Eve across all $\Nmax$ instances.\footnote{An alternative formalism could be to specify one attack $\atk_j$ for each instance, but that can simply be considered a special case of the formalism we describe here, for instance by setting $\atk = (\atk_1, \atk_2, \dots)$.} Once again, we do not impose restrictions on the exact ``types of mathematical objects'' used to formalize the device model $\mdl$ and the attack $\atk$. Our central requirement is only the following: 
\begin{condition}\label{cond:channels}
Given any model $\mdl$ and any attack $\atk$, there exists a corresponding sequence of channels $\channAM_j:Q_{j-1} \to \KA_j \KB_j E_j$, describing the physical transformations induced by the $j^\text{th}$ QKD instance from states on some ``initial'' register $Q_{j-1}$ to states on $\KA_j \KB_j E_j$ . 
\end{condition}

Under this framework, we introduce the following definition of what it means for a sequence of QKD protocol instances to be secure for a given device model $\mdl$:
\begin{definition}\label{def:modelsecure}
Following the above description, consider a device for use in up to $\Nmax$ instances of a QKD protocol.
For each $j\in\{1,2,\dots,\Nmax\}$, we say that the $j^\text{th}$ QKD protocol instance is \term{$\esecure_j$-secure under device model $\mdl$} if for any attack $\atk$ and any initial states $\sigma^{(j)}$, we have:
\begin{align}\label{eq:epsQKDj}
d\left( \channAM_j\left[\sigma^{(j)}_{Q_{j-1}}\right] \,,\, \mathcal{K}_j \circ \channAM_j\left[\sigma^{(j)}_{Q_{j-1}}\right] \right) \leq \esecure_j,
\end{align}
where $\mathcal{K}_j$ is a channel that reads the length $\ell$ of the keys on registers $\KA_j \KB_j$ and replaces them with perfectly correlated uniform keys \emph{of that length}, i.e.~the state 
\begin{align}\label{eq:idealkeyj}
\omega^\ell_{\KA_j\KB_j} \defvar \sum_{k\in\{0,1\}^\ell} 2^{-\ell} \ketbra{kk}{kk}_{\KA_j\KB_j}.
\end{align}
\end{definition}

Observe that in a context where the device is used sequentially for multiple instances of a QKD protocol, we can write any state generated after $n$ such instances in the following form (as long as certain technicalities hold regarding memory registers; see Remark~\ref{remark:memory}):
\begin{align}
\channAM_n \circ \channF_{n-1} \circ \dots \channAM_2 \circ \channF_1 \circ \channAM_1 \left[\sigma^{(0)}_{Q_0}\right], \label{eq:state_n}
\end{align}
for some initial state $\sigma^{(0)}$ and some channels $\channF_j: \KA_j \KB_j E_j \to Q_j$ that represent arbitrary ``physical processes'' taking place between the completion of the $j^\text{th}$ protocol instance and the following instance. In the subsequent discussions, we will be discussing properties of states with the form in~\eqref{eq:state_n}, with the understanding that they are meant to describe the state at the end of $n$ QKD protocol instances.

We highlight that in this formalism we have chosen, we allow each of the $\channF_j$ channels to potentially act nontrivially on the keys $\KA_{j}\KB_{j}$ produced by the preceding channel, not just the ``side-information'' register $E_{j}$. This is because these channels are intended to capture ``all physical processes'' that take place in each instance, and in many practical applications of QKD, one would presumably want to make use of the keys produced in each instance before the next instance has fully completed. This could potentially leak information about those keys to an adversary --- as an example, if $\KA_{j}\KB_{j}$ are used in a one-time-pad protocol, then the ciphertext can be nontrivially correlated to $\KA_{j}\KB_{j}$ (this being essentially the reason one-time-pads cannot be used to encrypt more than one message). Another particularly critical example would be using part of the keys produced by one instance to authenticate the subsequent instance, as discussed in e.g.~\cite{Por14}. By including these keys in the domain of $\channF_j$, our formalism is able to capture such scenarios as well; in particular, we do not require that the keys remain ``unused'' until the devices have reached the end of their certification lifespan (which would clearly be impractical).

\begin{remark}\label{remark:memory}
Some care is needed in interpreting the registers $E_j$ in the above definition. Specifically, we first emphasize it is known that if the behaviour of the device in some QKD protocol instance depends nontrivially on some ``memory'' of the raw key retained from previous instances, then this fundamentally breaks the composable security of QKD~\cite{BCK13}. {While that work was focused on device-independent QKD, the same issue arises for any other form of QKD (including measurement-device-independent QKD) in the absence of assumptions ruling out such forms of memory effects in the devices, whether they be for state preparation or measurement.}
Therefore, we should consider how this is to be reflected in the above formalism, denoting the memory retained in a QKD device after each protocol instance as $R_j$. 

One option is to view $R_j$ as being implicitly part of the $E_j$ register, but in that case showing that the bound~\eqref{eq:epsQKDj} holds would require proving that the memory is \emph{also} basically ``decoupled'' from the secret key --- with this, we would have no interpretational difficulties with the modelling and theoretical framework; however, proving that the memory is indeed decoupled from the key would involve invoking suitable physical assumptions. Alternatively, we could exclude $R_j$ from the $E_j$ register, but note that in the above formalism this is tantamount to saying that \emph{no ``allowed'' physical process} gives Eve access to $R_j$ in any form, in that none of the subsequent $\channF_j$ and $\channAM_j$ channels can act on $R_j$ in any way --- for practical purposes, this might be operationally equivalent to supposing the devices do not in fact retain memory between protocol instances. 

While these options might appear to be somewhat restrictive conditions, we highlight that the existence of the~\cite{BCK13} attack demonstrates that QKD would \emph{genuinely} become insecure in the absence of assumptions ruling out ``nontrivial'' memory effects, and hence some assumptions of this nature are in fact necessary (unless countermeasures such as~\cite{MS14,CL19,LRR19} are applied, but we leave for future work an analysis of how they interact with a certification framework).
\end{remark}

\subsection{Main result}
\label{subsec:mainresult}

With this formalism, we now turn to the main question of how to analyze certification procedures in these terms. We shall suppose that given some device with model $\mdl$, the certification procedure either \term{approves} the device for use in up to $\Nmax$ QKD protocol instances, in which case it outputs a classical register $F=1$, or it \term{rejects} the device, in which case it outputs $F=0$ (with the probabilities of each outcome being some function of $\mdl$). In that case, the relevant state to consider before the start of the first QKD protocol instance can be written in the form 
\begin{align}\label{eq:initialstate}
\sigma_{Q_0 F} = \Pr[F=0]\, \sigma_{Q_0 | F=0} \otimes \pure{0}_F + \Pr[F=1]\, \sigma_{Q_0 | F=1} \otimes \pure{1}_F, 
\end{align}
where $Q_0$ is the input register to the first channel $\channAM_1$ discussed in Condition~\ref{cond:channels}, and $\sigma_{Q_0 | F=0}$ and $\sigma_{Q_0 | F=1}$ respectively denote the states on this register conditioned on $F=0$ or $F=1$. 

We highlight that in order to be able to draw any nontrivial conclusions in a certification framework, it is necessary to implicitly restrict the ``universe of possible device models'' in some sense. For instance, if we were to consider the hypothetical possibility of a device that behaves \emph{exactly identically in every way} to some ``ideal'' honest device during the certification procedure, only to later behave in some malicious fashion when used in the actual QKD protocols, then it would be impossible for a certification procedure to make nontrivial conclusions about this device (since by hypothesis, none of its malicious behaviour can be ``observed'' during certification). Therefore, it is necessary to rule out at least {some} forms of device behaviour purely \emph{by assumption}, rather than relying on a certification procedure to do so. We shall formalize this in this work by supposing that all device models $\mdl$ lie within some universal set (or class) denoted as $\universe$, which is chosen to be broad enough to capture all ``reasonable'' device behaviours, but still sufficiently specific to allow proving nontrivial statements such as the criteria in Proposition~\ref{prop:mainbound} below.

\begin{remark}
As a simple example, we could for instance restrict $\universe$ to only consist of device models such that the state preparation device and the measurement device behave in an independent and identically distributed (IID) manner every time they are used, even if we do not exactly know the state prepared or measurement performed in a single usage. Note that this is a milder assumption than assuming a restriction to ``IID collective attacks'' (see e.g.~\cite{PR22}) in QKD security proofs; we return to this point at the end of Sec.~\ref{sec:certify}, when discussing confidence intervals. Qualitatively, this rules out the above hypothetical of the device performing very differently during the certification procedure compared to the actual protocol, and allows us to obtain nontrivial conclusions from Proposition~\ref{prop:mainbound} later. However, we emphasize that this is just a starting example, and one could consider more general choices of $\universe$ if desired, as we elaborate on at the end of Sec.~\ref{sec:certify}.
\end{remark}

Since a device model $\mdl$ could in general be a very complicated object (as it should describe the entire behaviour of the device over all $\Nmax$ protocol instances), we shall introduce an additional concept to simplify the subsequent discussion and the practical certification procedure. Namely, we shall suppose that each device model $\mdl$ is associated to a list of $\Nparam$ \term{parameters} $(\param_1, \param_2, \dots, \param_{\Nparam}) \eqqcolon \symvec{\param}$, which informally speaking summarizes some ``main'' properties relevant to a security analysis. As a basic example, one of these parameters could for instance be an upper bound\footnote{In this example we choose to specify the parameter to be an \emph{upper bound} on the dark count rate rather than ``the'' dark count rate, since the latter may not be a well-defined single number if the detector behaviour is not IID.} on the dark count rate of the detectors, where the bound is to hold over all $\Nmax$ protocol instances. 
Again, this is just a simple example, which is not meant to be definitive --- for now we retain flexibility in what can be considered a parameter of the device model, returning to this question later after presenting a notion of what we require from this concept.

\begin{remark}
Note that the simplification in terms of parameters is not strictly mathematically necessary for our subsequent results, which could in principle be rephrased directly in terms of the device models $\mdl$ themselves. However, for practical purposes it is much easier to discuss the certification procedure in terms of a concise list of parameters $\symvec{\param}$.
\end{remark}

With this, we present our main technical statement regarding device certification. We highlight that it formally holds for any choice of the universe of models $\universe$; however, an ``overly broad'' choice of $\universe$ might make it impossible for the criteria in this proposition to be satisfied, as discussed above, making this statement vacuous in those circumstances. Also as discussed above, the channels $\channF_j$ in the statement should be understood to represent any arbitrary ``physical processes'' taking place between the QKD instances.

\begin{restatable}{proposition}{mainbound}\label{prop:mainbound}
    Following the above description, consider a device for use in up to $\Nmax$ instances of a QKD protocol, and let $\robustset$ be some set of parameter values $\symvec{\param}$. Suppose the following criteria are satisfied:
    \begin{enumerate}
    \item \label{crit:cert} For every device model $\mdl \in \universe$ with parameters $\symvec{\param}\notin\robustset$, the certification procedure rejects with probability at least $1-\ecert$, i.e.~$\Pr[F=1] \leq \ecert$. 
    \item \label{crit:QKD} For every device model $\mdl \in \universe$ with parameters $\symvec{\param}\in\robustset$, the $j^\text{th}$ QKD protocol is $\esecure_j$-secure under device model $\mdl$ (Definition~\ref{def:modelsecure}).
    \end{enumerate}
    Then for any $n \in\{1,2,\dots,\Nmax\}$, any state $\sigma_{Q_0 F}$ (as in~\eqref{eq:initialstate}) at the start of the first protocol instance, any attack $\atk$ and device model $\mdl \in \universe$, and any sequence of channels $\channF_j: \KA_j \KB_j E_j \to Q_j$ for $j\in\{1,2,\dots,n-1\}$, we have:
    \begin{align}
    \begin{gathered}
    \Pr[F=1]\; d\left( \channAMF_n \circ \dots \channAMF_1 \left[\sigma_{Q_0 | F=1}\right] \,,\, \mathcal{K}_n \circ \channAMF_n \circ \dots \mathcal{K}_1 \circ \channAMF_1 \left[\sigma_{Q_0 | F=1}\right] \right)\leq \etotal_n, \\
    \text{with } \etotal_n = 
    \ecert+ \sum_{j=1}^n \esecure_j ,
    \end{gathered}
    \label{eq:mainbound}
    \end{align}
    where for notational compactness we write $\channAMF_j \defvar \channAM_j \circ \channF_{j-1}$ (defining $\channF_0$ to be the identity channel on $Q_0$),
    and $\mathcal{K}_j$ is a channel that reads the length $\ell$ of the keys on registers $\KA_j \KB_j$ and replaces them with perfectly correlated uniform keys \emph{of that length}, i.e.~the state $\omega^\ell$ in~\eqref{eq:idealkeyj}.
\end{restatable}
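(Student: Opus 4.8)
The plan is to reduce the claim to a case split on the fixed model $\mdl\in\universe$: its parameter list satisfies either $\symvec{\param}\notin\robustset$ or $\symvec{\param}\in\robustset$, and I would verify the bound in~\eqref{eq:mainbound} separately in each case, in both cases using $\etotal_n=\ecert+\sum_{j=1}^n\esecure_j$ as a common upper bound.

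In the case $\symvec{\param}\notin\robustset$, I would invoke Criterion~\ref{crit:cert} together with the trivial fact that the trace distance is always at most $1$. Then the left-hand side of~\eqref{eq:mainbound} is at most $\Pr[F=1]\cdot 1\le\ecert\le\etotal_n$, settling this case immediately. This is exactly where the $\Pr[F=1]$ prefactor earns its keep: it absorbs our complete lack of control over the conditional state produced by a device that ought to have been rejected, which is consistent with the earlier conceptual discussion of why one cannot meaningfully condition on approval.

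In the case $\symvec{\param}\in\robustset$, I would discard the prefactor via $\Pr[F=1]\le 1$ and bound the trace distance itself by a standard hybrid (telescoping) argument interpolating between the ``fully real'' state $\channAMF_n\circ\dots\circ\channAMF_1[\sigma_{Q_0 | F=1}]$ and the ``fully ideal'' state $\mathcal{K}_n\circ\channAMF_n\circ\dots\circ\mathcal{K}_1\circ\channAMF_1[\sigma_{Q_0 | F=1}]$. The $k$-th hybrid inserts the idealizing channel $\mathcal{K}_j$ immediately after $\channAMF_j$ for the first $k$ instances and omits it thereafter, so consecutive hybrids differ only by the insertion of a single $\mathcal{K}_j$ right after $\channAMF_j$, with an identical ``outer'' composition of channels applied afterwards and an identical ``inner'' state $\tau$ produced beforehand. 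Since $\channAMF_j=\channAM_j\circ\channF_{j-1}$, the guarantee of Definition~\ref{def:modelsecure} for $\channAM_j$ (granted by Criterion~\ref{crit:QKD} for the given attack $\atk$) promotes to the bound $d(\channAMF_j[\tau],\,\mathcal{K}_j\circ\channAMF_j[\tau])\le\esecure_j$, by taking the $Q_{j-1}$-input of $\channAM_j$ to be $\channF_{j-1}[\tau]$. Applying contractivity of the trace distance under the common outer channel and then summing over $j$ via the triangle inequality yields $\sum_{j=1}^n\esecure_j\le\etotal_n$, as required.

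The argument is structurally routine, so the ``hard part'' is conceptual rather than computational. One must ensure the security guarantee of Definition~\ref{def:modelsecure} is applied with the correct input at each step of the hybrid, which is precisely why that definition is quantified over \emph{all} initial states $\sigma^{(j)}$ rather than a single prepared state: the inner state $\tau$ feeding into the $j$-th instance is generated by the preceding (and partly idealized) channels $\channF$ and $\mathcal{K}$, and is not known a priori. Care is also needed to track that a single fixed pair $(\atk,\mdl)$ is used throughout the hybrid, so that the channels $\channAM_j$ remain mutually consistent across all intermediate states, and to note that the $\channF_j$ channels acting on the keys are harmlessly absorbed either into the inner state $\tau$ or into the outer contraction step.
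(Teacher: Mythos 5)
Your proof is correct and matches the paper's own argument essentially step for step: the same case split on $\symvec{\param}\notin\robustset$ versus $\symvec{\param}\in\robustset$, the same bound $\Pr[F=1]\cdot 1\le\ecert$ in the first case, and the same hybrid/telescoping argument in the second case (triangle inequality plus contractivity of trace distance under channels, with the guarantee of Definition~\ref{def:modelsecure} applied to the intermediate state $\channF_{j-1}[\tau]$, exploiting that the definition is quantified over all input states). The only cosmetic difference is that the paper writes out the $n=2$ case explicitly and then iterates, whereas you phrase the general $n$-step hybrid directly.
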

The proof is fairly straightforward, hence we defer it to Appendix~\ref{app:mainproof}.\footnote{In fact, from the proof it can be seen that the upper bound in~\eqref{eq:mainbound} can be sharpened to $\etotal_n = \max\left\{\ecert \,,\, \sum_{j=1}^n \esecure\right\}$.} For now we simply discuss its qualitative implications. In particular, observe that it allows us to more precisely specify some goals we should aim for in security proofs and a certification procedure, via the two criteria. Specifically, the second criterion tells us we should aim to construct security proofs that have some ``robustness'' to device imperfections, in the sense that the QKD protocols are secure for all devices with model parameters in some choice of set $\robustset$. From the other direction, the first criterion tells us the certification procedure should aim to reject all devices with model parameters outside $\robustset$ with high probability. This is the main proposal of our work, and in Sec.~\ref{sec:certify}--\ref{sec:securityproof}, we explore in greater detail how these goals could be achieved in practice.

Before proceeding, we briefly discuss how to interpret the trace-distance bound obtained from the above result. In particular, observe that the state $\channAMF_n \circ \dots \channAMF_1 \left[\sigma_{Q_0 | F=1}\right]$ in the above formula is basically the real state produced after $n$ protocol instances for a given attack and device model (conditioned on the certification procedure approving), while the state $\mathcal{K}_n \circ \channAMF_n \circ \dots \mathcal{K}_1 \circ \channAMF_1 \left[\sigma_{Q_0 | F=1}\right]$ is essentially a more ``ideal'' version in which the keys produced at the end of each protocol instance have been immediately replaced by perfect keys of that length. Hence our above result is an upper bound on the trace distance between these states, weighted by the prefactor of $\Pr[F=1]$. Note that the possibility of Eve performing different actions depending on the certification outcome can be implicitly incorporated into the framework we describe here, simply by defining the channels $\channAM_n$ to act on a classical register storing that outcome as well.

However, we emphasize (similar to Remark~\ref{remark:misconception}) that it would \emph{not} be correct to view the above bound as saying that the trace distance between the states conditioned on the certification procedure approving (i.e.~$\channAMF_n \circ \dots \channAMF_1 \left[\sigma_{Q_0 | F=1}\right]$ and $\mathcal{K}_n \circ \channAMF_n \circ \dots \mathcal{K}_1 \circ \channAMF_1 \left[\sigma_{Q_0 | F=1}\right]$) is small. (An analogous point also applies for conditioning on some or all of the QKD protocol instances accepting, as already discussed in Remark~\ref{remark:misconception}.) This is due to the prefactor of $\Pr[F=1]$ in~\eqref{eq:mainbound}, which allows the trace distance to potentially be 
large if $\Pr[F=1]$ is small. Still, we can recover some intuitive properties from this bound as follows. Given a state $\sigma_{Q_0 F}$ as in~\eqref{eq:initialstate} at the start of the first protocol instance, we can write the final state at the end of all the instances in the form 
\begin{align}\label{eq:finalrho}
\begin{aligned}
\rho_{\KA_n \KB_n E_n F} &= \Pr[F=0]\, \rho_{\KA_n \KB_n E_n | F=0} \otimes \pure{0}_F \\
&\qquad+  \Pr[F=1]\, \channAMF_n \circ \dots \channAMF_1 \left[\sigma_{Q_0 | F=1}\right] \otimes \pure{1}_F, 
\end{aligned}
\end{align}
i.e.~a mixture of the cases conditioned on the certification procedure approving or rejecting. If we now define another state $\widetilde{\rho}$ via
\begin{align}\label{eq:finalrhoideal}
\begin{aligned}
\widetilde{\rho}_{\KA_n \KB_n E_n F} &= \Pr[F=0]\, \rho_{\KA_n \KB_n E_n | F=0} \otimes \pure{0}_F \\
&\qquad+  \Pr[F=1]\, \mathcal{K}_n \circ \channAMF_n \circ \dots \mathcal{K}_1 \circ \channAMF_1 \left[\sigma_{Q_0 | F=1}\right] \otimes \pure{1}_F, 
\end{aligned}
\end{align}
then the bound~\eqref{eq:mainbound} is equivalent to
\begin{align}
d\left( \rho_{\KA_n \KB_n E_n F} \,,\, \widetilde{\rho}_{\KA_n \KB_n E_n F} \right)\leq \etotal_n,
\end{align}
i.e.~these states are genuinely $\etotal_n$-close in trace distance. 

We stress that the states $\rho_{\KA_n \KB_n E_n F}$ and $\widetilde{\rho}_{\KA_n \KB_n E_n F}$ are \emph{mixtures} of the states conditioned on the certification procedure approving or rejecting, not just the states conditioned on approving. However, in many contexts $\rho_{\KA_n \KB_n E_n F}$ is indeed the final state of interest, since the actual state that would be physically produced is indeed this mixture. As for $\widetilde{\rho}_{\KA_n \KB_n E_n F}$, while it is not as ``ideal'' a state as one might initially have hoped for, it should still possess many useful properties if we presume that the component $\rho_{\KA_n \KB_n E_n | F=0}$ conditioned on the certification procedure rejecting is essentially ``trivial'' in some sense (for instance, the device is never used for any QKD protocols and all the registers $\KA_j \KB_j$ are just set to some trivial ``null'' values). In particular, for readers already familiar with the Abstract Cryptography framework~\cite{MR11,PR22}, we discuss in Appendix~\ref{app:AC} how this analysis suffices to show we have a composably secure construction of a particular resource (appropriately modified from the usual resource constructed by QKD), under certain assumptions. 

For now, we simply note that due to the fundamental properties of trace distance, the trace-distance bound implies for instance that if we were to replace the actual state $\rho_{\KA_n \KB_n E_n F}$ with $\widetilde{\rho}_{\KA_n \KB_n E_n F}$ in any other security analysis, then the probability of any event cannot change by more than $\etotal_n$, which would already be a useful operational result for many security statements (as long as one keeps in mind that $\widetilde{\rho}_{\KA_n \KB_n E_n F}$ has the structure in~\eqref{eq:finalrhoideal} for some ``unknown'' probabilities $\Pr[F=0]$, $\Pr[F=1]$). 

One might question whether the $\Pr[F=1]$ prefactor in the Proposition~\ref{prop:mainbound} bound is truly necessary. We highlight however that it is straightforward to construct counterexamples (even with simple IID devices) such that the final state conditioned on the certification procedure approving satisfies
\begin{align}\label{eq:largedistance}
d\left( \channAMF_n \circ \dots \channAMF_1 \left[\sigma_{Q_0 | F=1}\right] \,,\, \mathcal{K}_n \circ \channAMF_n \circ \dots \mathcal{K}_1 \circ \channAMF_1 \left[\sigma_{Q_0 | F=1}\right] \right) \approx 1,
\end{align}
i.e.~it is very far from the ``ideal'' state. (This does not contradict Proposition~\ref{prop:mainbound}, precisely because the bound in it has the $\Pr[F=1]$ prefactor.)
Specifically, we simply need to again consider a ``genuinely insecure'' device as described in the introduction, such as the example of a fully phase-coherent laser being used for QKD protocols in a high-loss regime. As discussed in the introduction, Eve would always be able to perfectly learn the QKD keys generated using such a device, even conditioned on the certification approving --- hence the trace distance between the states conditioned on approving, as in~\eqref{eq:largedistance}, would indeed be close to $1$.
Therefore, any attempt to bound the trace distance involving the state conditioned on the certification procedure accepting would necessarily need to account for $\Pr[F=1]$ in some form, even if it might not be strictly in the form presented in~\eqref{eq:mainbound}.

\section{Considerations for the certification procedure}
\label{sec:certify}

In this section, we outline some examples of how a certification procedure could be designed to reject ``bad'' devices, in the sense of those with model parameters outside of some set $\robustset$ for which we can prove the QKD protocols are secure (see Sec.~\ref{sec:securityproof}). We again highlight that the following examples are not necessarily definitive, in the sense that Proposition~\ref{prop:mainbound} is not concerned with the details of exactly how this is achieved --- it only requires that such devices are rejected with high probability. Still, we believe that the scenarios considered here are reasonably reflective of procedures that could be applied in practice; at the end of this section, we also discuss some natural extensions.

To begin, we briefly present the basic definition of a {confidence interval}:
\begin{definition}\label{def:confidence}
Consider a probability distribution described by some fixed value $\param$, usually referred to as a \term{parameter}. If we have two random variables $\rv{X}^\mathrm{low}$ and $\rv{X}^\mathrm{upp}$ such that the interval $\Iconf \defvar [\rv{X}^\mathrm{low}, \rv{X}^\mathrm{upp}]$ contains $\param$ with probability at least $1-\eps$, then we say $\Iconf$ is a \term{confidence interval} for $\param$ at confidence level $\eps$.
\end{definition}
We emphasize a standard conceptual point regarding confidence intervals: in the above definition, $\param$ is to be understood as a \emph{fixed} value, with no ``randomness'' involved. Therefore, when one says that $\Iconf$ contains $\param$ with some probability, the ``random quantity'' is \emph{not} the value $\param$ but rather the interval $\Iconf$, and it would be a misconception to speak of (for instance) the probability distribution of $\param$.
As a default example, suppose we have $n$ random variables $\rv{X}_1, \rv{X}_2, \dots , \rv{X}_n$ following an IID distribution 
$P_{\rv{X}_1 \rv{X}_2 \dots \rv{X}_n} = P_{\rv{X}_1} P_{\rv{X}_2} \dots P_{\rv{X}_n}$,
where the expected value for any individual $\rv{X}_j$ is some value $\param$. A common method to construct a confidence interval for $\param$ is to compute the ``sample mean'' $\rv{X}_\mathrm{avg} \defvar \frac{1}{n} \sum_j \rv{X}_j$, and then set $\Iconf = [\rv{X}_\mathrm{avg} - \delta, \rv{X}_\mathrm{avg} + \delta]$ for some suitably chosen value $\delta$ (depending on $n$, $\eps$ and some other parameters, such as the range of the random variables). Notice that in this process, the value of $\param$ is a \emph{fixed} quantity, with the ``random quantities'' being $\{\rv{X}_j\}$, $\rv{X}_\mathrm{avg}$, and thus $\Iconf$.

Critically, however, the standard valid interpretation of confidence intervals is fully compatible with the framework suggested by Proposition~\ref{prop:mainbound}, as we now describe. As a crude starting demonstration, consider a very simplistic scenario where $\universe$ consists only of devices that behave in an IID manner every time they are used, and the only parameter of importance is, say, the dark count rate $\xdark$ in the photon detection device (which is constant across all uses, by the IID restriction). Suppose furthermore that the set $\robustset$ is simply some (fixed) interval $\Idark$, i.e.~we have established security proofs such that QKD protocols performed with this device are secure whenever the device has dark count rate $\xdark$ within the interval $\Idark$. In that case, if we design the certification procedure such that it constructs a confidence interval $\Iconf_\mathrm{dark}$ for $\xdark$ at confidence level $\ecert$, and set the approval condition to
\begin{align}
\text{Approve if and only if } \Iconf_\mathrm{dark} \subseteq \Idark,
\end{align}
then it follows directly from the definition of confidence intervals that indeed, this certification procedure fulfills the criterion in Proposition~\ref{prop:mainbound} that it rejects with probability at least $1-\ecert$ whenever $\xdark \notin \Idark$. Specifically, simply recall that from Definition~\ref{def:confidence} we have that $\Iconf_\mathrm{dark}$ contains $\xdark$ with probability at least $1-\ecert$; therefore, whenever $\xdark \notin \Idark$, with that probability $\Iconf_\mathrm{dark}$ contains the value $\xdark$ outside $\Idark$, causing the certification to reject. Also notice that as one would intuitively expect, it is advantageous to construct $\Iconf_\mathrm{dark}$ as small as possible, since it makes the certification procedure more likely to approve ``good'' devices with $\xdark \in \Idark$, while still retaining the desired security properties (as long as $\Iconf_\mathrm{dark}$ still satisfies Definition~\ref{def:confidence}). Finally, we acknowledge that a natural question is whether one could relax the approval condition to $\Iconf_\mathrm{dark}$ having non-empty overlap with $\Idark$; however, this is not strong enough to show the desired property.

More realistically, we would want to consider multiple parameters $(\param_1, \param_2, \dots, \param_{\Nparam}) \eqqcolon \symvec{\param}$ as described in the previous section. Suppose for simplicity  that $\robustset$ is simply defined by ``componentwise'' intervals, i.e.~it has the following form (for some fixed values $\param_k^\mathrm{low}, \param_k^\mathrm{upp}$): 
\begin{align}\label{eq:intervalwiseS}
\robustset = \left\{ \symvec{\param} \in \mathbb{R}^{\Nparam} \;\middle|\; \param_k \in [\param_k^\mathrm{low}, \param_k^\mathrm{upp}] \text{ for all } j \right\}.
\end{align}
In that case, a certification procedure satisfying the criterion in Proposition~\ref{prop:mainbound} would be to construct for each $k$ a confidence interval $\Iconf_k$ for $\param_k$ at confidence level $\ecert$, and set the approval condition to
\begin{align}
\text{Approve if and only if } \Iconf_k \subseteq [\param_k^\mathrm{low}, \param_k^\mathrm{upp}] \text{ for all } k.
\end{align}
Note that we do not require the confidence interval constructions to be independent in any way; we only require that each confidence interval individually satisfies Definition~\ref{def:confidence}.
Again, to see that this has the desired property, simply observe that if $\symvec{\param} \notin \robustset$, that means at least one $\param_k$ lies outside the corresponding interval $[\param_k^\mathrm{low}, \param_k^\mathrm{upp}]$. Therefore, with probability at least $1-\ecert$ the corresponding confidence interval $\Iconf_k$ contains this value outside $[\param_k^\mathrm{low}, \param_k^\mathrm{upp}]$, forcing the certification to reject (without any independence assumptions across the confidence intervals). {Again, one might ask whether the approval condition could be relaxed to having $\Iconf_k \subseteq [\param_k^\mathrm{low}, \param_k^\mathrm{upp}]$ for only some $k$ rather than all, or to only requiring non-empty overlap between $\Iconf_k$ and $[\param_k^\mathrm{low}, \param_k^\mathrm{upp}]$ --- however, these versions are again not strong enough to show the desired property.} 

We believe that the above example with multiple parameters should conceptually suffice to match certification procedures that can be performed in practice (though it does clarify precisely what should be set as the approval condition). In particular, we note that the above structure is completely indifferent to the exact method used to construct the confidence intervals, as long as they genuinely satisfy the foundational definition (Definition~\ref{def:confidence}), and hence is very flexible in that regard. One might even wish to consider the more general concept of \term{confidence regions}, which are subsets of $\mathbb{R}^{\Nparam}$ that are not necessarily defined by ``componentwise'' intervals (see Definition~\ref{def:confregion} for more details); similarly, one could consider choices of $\robustset$ that are not of the form~\eqref{eq:intervalwiseS}. Our above analysis generalizes straightforwardly by requiring the confidence region to be completely contained in $\robustset$. 

We close this section with some discussions of points to note in practice.

\paragraph{Assumptions in constructing confidence intervals.} It is worth highlighting that in order to construct a confidence interval in the first place, one usually needs to make some sort of assumption on the device behaviour during the certification procedure, such as an IID or independence property, or perhaps a martingale property (for instance, to apply Azuma's inequality). However, it seems reasonable to suggest that since this is a consideration of the \emph{devices} manufactured for use by the honest parties, one could take a less ``adversarial'' perspective on their behaviour as compared to a QKD security proof analyzing Eve's \emph{attack} on the quantum states being transmitted, and thus reasonably restrict $\universe$ such that the devices have the required independence properties. (In other words, this is weaker than an assumption that Eve is restricted to ``IID collective attacks''~\cite{PR22}.) Furthermore, in any case it seems unlikely that one would be able to draw nontrivial conclusions from certification without at least some assumptions of this nature. 

\paragraph{Non-constant behaviour.} While the above discussion is very general regarding what can be considered a ``parameter'', it may be worth discussing some specific scenarios more closely. For instance, it is natural to consider the possibility of a QKD device where its performance ``degrades'' over time in some fashion. One might wonder whether this would require us to include an entire \emph{continuum} of parameters in our formalism, describing its behaviour at every point in time. However, this would be a rather impractical way of capturing this within our formalism. Rather, we believe a more reasonable approach would be to assume the universe of models $\universe$ is restricted such that for any device model $\mdl$ within this universe, this ``degrading behaviour'' over its entire lifespan can be described in terms of some \emph{single} parameter, or a small number of parameters.

For example, single-photon detectors used in satellite QKD are expected to have time-varying dark count rates due to radiation damage and changes in temperature (see e.g.~\cite{lenart2025comparing}). This time-variation can be studied and considered to be a part of the universe of models $\universe$ --- for instance, we could capture this by restricting $\universe$ to be such that for any model $\mdl\in\universe$, the dark count rate in the $j^\text{th}$ QKD instance can be lower and upper bounded by some functions $f^\mathrm{low} (j,\param_\mathrm{temp})$ and $f^\mathrm{upp} (j,\param_\mathrm{temp})$ respectively, where $\param_\mathrm{temp}$ is the temperature around the detector, which one can aim to construct a confidence interval for during certification. This is mainly a basic example to capture the spirit of the idea; practically, one should give a more sophisticated description of $\universe$ and how the models within it can be described in terms of some parameters. Note that there is freedom to ``trade off'' against the security proofs here, in the sense that if the device behaviour is expected to degrade over time as described by the model $\mdl \in \universe$, then it may be possible to compensate for this in the security proofs for the QKD protocols (i.e.~criterion~\ref{crit:QKD} in Proposition~\ref{prop:mainbound}) by reducing the key lengths in later QKD instances, to ensure that the $j^\text{th}$ instance is indeed $\esecure_j$-secure for models with $\symvec{\param}\in\robustset$. 

Another consideration is whether some aspects of the device behaviour might deviate dramatically from the honest behaviour when subject to some form of adversarial attack, such as a Trojan-horse attack \cite{vakhitov2001large}. In that case, the relevant parameter to assign to it within our formalism would be for instance a \emph{bound} on how far it can deviate from the honest behaviour even when subjected to such attacks; for instance, by using an optical isolator \cite{vakhitov2001large}, one might be able to bound the maximum extent of deviation (under ``reasonable'' assumptions \cite{lucamarini2015practical} about the intensity of light the adversary can send, and so on, which would be captured within our framework by restricting the universe $\universe$). The security proofs would then have to ensure that as long as this parameter lies within $\robustset$, the QKD protocol instances are still secure.

Overall, this highlights the need to ``embed'' assumptions into $\universe$ somewhat carefully, so that we can still describe the models with a tractable number of parameters for certification, while still being able to prove security for models with $\symvec{\param}\in\robustset$. This is something that is pragmatically necessary, as the space of ``all conceivable  device models'' would be unreasonably large, and impossible to design a practical certification procedure for. Practically, this would mean choosing $\universe$ according to the accepted best practices at a given time, such that they indeed capture all device behaviours that are ``realistically'' known to occur (even in the presence of adversarial attacks during the QKD protocol), while still being tractable to analyze. It is also important to ensure that the security proofs for the QKD protocols remain ``up-to-date'' with the choice of $\universe$, so that they indeed ensure the required security properties for models with $\symvec{\param}\in\robustset$. In the next section, we turn to discussing this point.

\markup{\paragraph{Practical challenges in metrology.}
A significant practical challenge remains in the \textit{metrological} quality of the certification procedure. In a realistic deployment, the certification is performed using reference devices (e.g., single-photon detectors, laser sources, etc.) that are themselves subject to uncertainties and imperfections.  Further, in order to accurately capture errors such as misalignment, a realistic certification procedure would involve a complete certification of the entire detector or source module rather than the individual parameters that comprise of it.
Rigorously developing a certification procedure that accounts for these imperfections is a challenging task that we leave to future work.
Our framework highlights that the security of a deployed system relies as much on the quality of these certification procedures as it does on the security proof techniques.}

\section{Considerations for the security proofs}
\label{sec:securityproof}

An important point arising from the preceding discussions is that in order for this framework to be practical, we do genuinely need QKD security proofs that have ``nontrivial robustness'' to imperfections, in the sense that we can show criterion~\ref{crit:QKD} in Proposition~\ref{prop:mainbound} holds for some choice of $\robustset$ that does not consist only of a single point. This is because from the preceding discussion, as long as the certification procedure can only construct confidence intervals of nonzero width, it would never approve the device if $\robustset$ consists only of a single point (or is otherwise ``too small'' for reasonably-sized confidence intervals to be contained within it). Hence it highlights the importance of constructing QKD security proofs in such a way that they have some robustness to device imperfections, as is a subject of much ongoing work~\cite{curras2023securityframework, arqand2024mutual, marwah2024proving, sixto2024quantum, zapatero2021security, curras2023security, tupkary2024phaseerrorrateestimation, kamin2025r, nahar2025imperfect}.

Furthermore, given the preceding discussion of the flexibility in choosing parameters, we see that it is important for security proofs to remain ``in touch'' with the practical reality of what can be measured during a certification procedure. One should choose the parameters and the universe $\universe$ in such a way that there is compatibility between the certification procedures and the security proofs, while still ``realistically'' reflecting the actual behaviour of the devices. It is likely that this will need to involve an ongoing process of updating the models and choices of parameters, as better understanding is gained of device characteristics and potential flaws.

\markup{

\section{Pedagogical example: Detector dark counts}
\label{sec:concreteExample}

To demonstrate the application of our framework more concretely, we consider the characterization of a receiver's device for the BB84 protocol. We focus specifically on the characterization of the detectors' dark count rates, illustrating the application of our work.

\subsection{The physical model}

We consider a standard passive basis-choice detection setup consisting of a basis-choice beam splitter leading to two measurement arms (one for the Z basis, one for the X basis), utilizing a total of $4$ single-photon detectors. We assume that the detection setup couples perfectly to a single spatio-temporal input mode, and all detectors have unit efficiencies. Further, we assume that the repetition rate of the protocol is fixed. This repetition rate allows us to define the probability of a dark count of each detector in a single protocol round. We \textit{do not} assume that this dark count probability is perfectly known. Finally, we assume that the detector behaviour is IID in each round, with no memory effects (again, we emphasize that this is \emph{not} restricting an attacker to collective attacks: it only involves the device held by the receiver).

In short, we assume all devices behave perfectly as per specification, except for the dark count probabilities which must be characterized. This defines our universe of models $\universe$. We associate with each device model $\mdl$ a single parameter $\xdark$ which represents the maximum dark count probability of all detectors.
While we could define the dark count probability in each detector to be an independent parameter in our device model, we find this simpler, single parameter description is sufficient for security proofs \cite{nahar2025imperfect,nahar2026proof}.

\subsection{The certification procedure}

Let us define the 
set $\robustset$ as the subset of all parameter values where all 4 detectors have a dark count probability at most $\xdark^\mathrm{upp}$ (for some value $\xdark^\mathrm{upp}$ that is fixed before the certification procedure takes place):
\begin{align}
    \robustset = \left\{ \xdark \in [0,1] \mid \xdark \leq \xdark^\mathrm{upp} \right\}.
\end{align}

\begin{remark}
We emphasize a basic but crucial point: the value of $\xdark^\mathrm{upp}$ in this example (and thus $\robustset$) is to be fixed \emph{before} the certification procedure begins, and not adjusted afterwards. The choice of this value does not affect the security guarantees (as long as the subsequent QKD protocols are implemented with the key lengths prescribed in Sec.~\ref{subsec:robQKDProofDCs} later); however, in order for the certification procedure to be ``practically useful'', we need it to also accept with reasonably high probability whenever 
the device behaves according to the expected ``correct'' behaviour. In particular, this means that we would usually need to choose $\xdark^\mathrm{upp}$ to be somewhat higher than the expected ``correct'' value --- we stress however that the corresponding ``margin of tolerance'' is a \emph{distinct} concept from the value $\delta$ we describe below (which cannot be arbitrarily chosen without affecting the security guarantees), and should not be confused with or merged into it. 
\end{remark}

The certification procedure needs to be designed such that if $\xdark \notin \robustset$, the certification procedure rejects with probability at least $1-\ecert$ as described in criterion~\ref{crit:cert} of Proposition~\ref{prop:mainbound}.
Such a procedure can be implemented by formally integrating finite-size statistical analysis into existing methods \cite{etsi} for characterizing dark count probabilities of individual detectors. We sketch out the main steps in such a procedure:
\begin{enumerate}
    \item \textbf{Data collection:} Each detector is individually isolated from any input light. The detectors are each operated for $\nChar$ time bins. We record the resulting vector of click counts $\mathbf{k} = (k_1, k_2, k_3, k_4)$ for each of the four detectors, and define a corresponding vector of click frequencies
    \begin{align}
        \mathbf{f} = (f_1, f_2, f_3, f_4) \defvar \left( \frac{k_1}{\nChar} , \frac{k_2}{\nChar} , \frac{k_3}{\nChar} , \frac{k_4}{\nChar} , \right) .
    \end{align}
    More generally, one could use a different number of time bins for each detector, but it is not hard to modify the analysis accordingly if that is the case.

    \item \textbf{Statistical bounds:} We can treat the click count $k_j$ of each detector as a realization of a binomial random variable --- note this is because we have restricted the detector behaviour to be IID within our universe of models. For such variables, one can apply a standard IID concentration inequality (e.g., the Hoeffding bound \cite[Theorem 2]{hoeffding1994probability}, or other tail bounds on the binomial distribution) to calculate a finite-size penalty term $\delta$, with the following property: the probability that the random variable $f_j + \delta$ lies below the true dark count rate of that detector is at most $\ecert$. Note that $\delta$ will depend on the number of time-bins $\nChar$ and the desired $\ecert$ value. 
    
    With this, we straightforwardly have the fact that the probability that the random variable 
    \begin{align}
    \max_j \left\{f_j + \delta\right\} 
    \end{align}
    lies below the true maximum dark count rate over all detectors is also at most $\ecert$.\footnote{To see this: note that this event is equivalent to \emph{all} $f_j + \delta$ values lying below the true maximum dark count rate. Now focusing on the detector with the highest true dark count rate, this implies in particular that the $f_j + \delta$ value for that detector lies below its true dark count rate, and thus this event has probability at most $\ecert$.} (In the language of confidence intervals: $\left[0 , \max_j \left\{f_j + \delta\right\} \right]$
    is a confidence interval for the maximum dark count rate over all detectors, where we have set the left endpoint to the ``trivial'' extremal value of $0$ for a dark count rate.) 
    Using this, we can explicitly set the approval condition: the certification procedure approves the device if and only if the observed frequencies satisfy 
    \begin{equation} \label{eq:appCond}
        \max_j \left\{f_j + \delta \right\}< \xdark^\mathrm{upp} .
    \end{equation}
    From the preceding discussion, we have that if the true dark count rate of any detector is larger than $\xdark^\mathrm{upp}$, then the probability of the frequencies $f_j$ fluctuating low enough to satisfy Eq. (\ref{eq:appCond}) is at most $\ecert$.
\end{enumerate}

\begin{remark}
    The certification procedure presented here is straightforward only because of the large number of simplifying assumptions made in the physical model. More realistic device models with afterpulsing, non-unit efficiencies, dead times, etc.\ would require a more involved certification procedure which we leave for future work.
\end{remark}

In the above example, we framed the approval condition in terms of a single random variable $\max_j \left\{f_j + \delta \right\}$, rather than 4 separate conditions that each of the random variables $f_j + \delta$ is at most $\xdark^\mathrm{upp}$. While the latter is logically equivalent, we chose to present the former for several reasons. Firstly, this directly matches the framing for single-parameter confidence intervals in Sec.~\ref{sec:certify}; secondly, it  demonstrates constructing confidence intervals for parameters that do not literally correspond exactly to the expected value of some IID process (in this case, the \emph{maximum} dark count rate over the 4 detectors). Such a perspective may be useful in other contexts, for instance when considering non-constant behaviour as mentioned in Sec.~\ref{sec:certify}.

\subsection{Robust QKD security proof} \label{subsec:robQKDProofDCs}

For the QKD security proof, we can appeal to existing works \cite{nahar2025imperfect,nahar2026proof} that can prove security with an upper bound on the dark count rate. Thus, for every device model $\mdl \in \universe$ with parameter $\xdark<\xdark^\mathrm{upp}$, the $j^\text{th}$ QKD protocol can be proved to be $\esecure_j$-secure under device model $\mdl$ (Definition~\ref{def:modelsecure}) by a direct application of Refs.~\cite{nahar2025imperfect,nahar2026proof} (for a suitable choice of the final key length, which depends on the value $\xdark^\mathrm{upp}$). This is exactly criterion~\ref{crit:QKD} defined in Proposition~\ref{prop:mainbound}.

\subsection{Application of the framework}

This procedure directly satisfies the requirements of Proposition~\ref{prop:mainbound}. The approval condition in the previous subsection is constructed specifically such that if the maximum dark count rate over all detectors $\xdark$ were greater than $\xdark^{\mathrm{upp}}$ (i.e., the parameter was outside $\robustset$), then the probability of the certification procedure approving the device is upper bounded by $\ecert$. This corresponds exactly to criterion~\ref{crit:cert} defined in Proposition~\ref{prop:mainbound}. Similarly, as described in Sec.~\ref{subsec:robQKDProofDCs}, a device model $\mdl\in\universe$ with parameter $\xdark\in\robustset$ can be shown to satisfy criterion~\ref{crit:QKD} defined in Proposition~\ref{prop:mainbound}.

As a result, we can conclude from Proposition~\ref{prop:mainbound} that the joint process involving characterization followed by $n$ QKD protocols (with the keys potentially being used between each QKD protocol) is ``$\etotal_n$-secure'' in the sense formalized in Eq.~(\ref{eq:mainbound}).

}

\section{Conclusion}
\label{sec:conclusion}

Overall, in this work we have formalized a framework that allows us to draw rigorous conclusions regarding QKD device certification, as long as the certification procedure and QKD security proofs satisfy some criteria, via Proposition~\ref{prop:mainbound}. In particular, this establishes those criteria as a useful point to focus on when considering those aspects, providing guiding principles for future considerations of device certification procedures.

Of course, a valid consideration is the question of whether there can be potential alternatives to those criteria. One natural possibility is to observe that it may be too restrictive to focus only on certification procedures that make a ``binary'' decision of whether to approve or reject the devices. More flexibly, we should consider the possibility that after the certification procedure constructs confidence intervals for the parameters $\symvec{\param}$, these confidence intervals are used to make ``adaptive'' choices of the QKD protocols performed using the device, for instance by having the protocols output keys of shorter length if the confidence intervals suggest that the device imperfections are ``larger''. This could perhaps be more accurately referred to as a ``characterization'' procedure, rather than ``certification''. (This distinction is similar in spirit to the difference between fixed-length and variable-length QKD protocols~\cite{PR22}.) Using the ideas presented in this work, we can indeed obtain one possible rigorous formalization of this idea, which we present in Appendix~\ref{app:varproof}; we defer a more detailed analysis to future work.

Finally, we reiterate that while Proposition~\ref{prop:mainbound} is presented in terms of a trace-distance bound, we believe it is possible to connect this to the framework of Abstract Cryptography~\cite{MR11,PR22} under some technical assumptions, hence yielding a composable notion of security. While it is often said that the standard QKD trace-distance security criterion~\eqref{eq:standarddefn} ``implies composable security'', this property is not an immediate consequence of an arbitrary trace-distance security definition, but rather it must be proven within a suitable formalization of the notion of ``composable security''. Hence when we additionally introduce a certification procedure as in this work, it is necessary to re-evaluate whether an analogous claim is valid.
We outline some relevant points regarding this in Appendix~\ref{app:AC}, though a fully formal claim would require addressing certain technicalities regarding how to model QKD devices as ``resources'' within the Abstract Cryptography framework.

\section*{Acknowledgements}
We thank Jan Krause, Norbert L\"{u}tkenhaus, and Jerome Wiesemann for helpful discussions.
This work was performed at the Institute for Quantum Computing, at the University of Waterloo, which is supported by Innovation, Science, and Economic Development Canada. This work was supported by NSERC under the Discovery Grants Program, Grant No. 341495.
This research has been supported by Alliance QUINT.

\printbibliography

\appendix

\section{Proof of main proposition}
\label{app:mainproof}

For clarity, we first re-state the claim of the main proposition:
\mainbound*
\begin{proof}
    We prove the proposition by considering two exhaustive cases.
    First consider the case of a model $\mdl\in\universe$ with parameters $\symvec{\param}\notin\robustset$. Then criterion \ref{crit:cert} states that $\Pr[F=1] \leq \ecert$. Since the trace distance between two quantum states is upper bounded by 1, we get 
    \begin{align}
        \Pr[F=1]\; d\left( \channAMF_n \circ \dots \channAMF_1 \left[\sigma_{Q_0 | F=1}\right] \,,\, \mathcal{K}_n \circ \channAMF_n \circ \dots \mathcal{K}_1 \circ \channAMF_1 \left[\sigma_{Q_0 | F=1}\right] \right)\leq \ecert \leq \etotal_n,
    \end{align}
    as required.
    
    Next consider the case of a model $\mdl\in\universe$ with parameters $\symvec{\param}\in\robustset$. Observe that criterion \ref{crit:QKD} implies that for all $j$ and any attack $\atk$,
    \begin{align} \label{eq:epsSecurityjDefinition}
        \sup_{\sigma^{(j)}_{Q_{j-1}}} d\left( \channAMF_j\left[\sigma^{(j)}_{Q_{j-1}}\right] \,,\, \mathcal{K}_j \circ \channAMF_j\left[\sigma^{(j)}_{Q_{j-1}}\right] \right) \leq \esecure_j,
    \end{align} 
    with the supremum being taken over all quantum states $\sigma^{(j)}_{Q_{j-1}}$.
    We use this to first show that for the $n=2$ case, we have $d\left( \channAMF_2 \circ \channAMF_1 \left[\sigma_{Q_0 | F=1}\right] \,,\, \mathcal{K}_2 \circ \channAMF_2 \circ \mathcal{K}_1 \circ \channAMF_1 \left[\sigma_{Q_0 | F=1}\right] \right)\leq \esecure_1+\esecure_2$, as follows:
    \begin{align}
        \nonumber &d\left( \channAMF_2 \circ \channAMF_1 \left[\sigma_{Q_0 | F=1}\right] \,,\, \mathcal{K}_2 \circ \channAMF_2 \circ \mathcal{K}_1 \circ \channAMF_1 \left[\sigma_{Q_0 | F=1}\right] \right)\\
        \nonumber = &\frac{1}{2}\norm{\left(\channAMF_2 \circ \channAMF_1 - \mathcal{K}_2 \circ \channAMF_2 \circ \mathcal{K}_1 \circ \channAMF_1\right)\left[\sigma_{Q_0 | F=1}\right]}_1\\
        \nonumber = &\frac{1}{2}\norm{\left(\channAMF_2 \circ \channAMF_1 -\channAMF_2 \circ\mathcal{K}_1 \circ \channAMF_1+\channAMF_2 \circ\mathcal{K}_1 \circ \channAMF_1- \mathcal{K}_2 \circ \channAMF_2 \circ \mathcal{K}_1 \circ \channAMF_1\right)\left[\sigma_{Q_0 | F=1}\right]}_1\\
        \label{eq:triangleIneqUsed}\leq &\frac{1}{2}\norm{\channAMF_2\circ\left(\channAMF_1-\mathcal{K}_1 \circ \channAMF_1\right)\left[\sigma_{Q_0 | F=1}\right]}_1 + \frac{1}{2}\norm{\left(\channAMF_2-\mathcal{K}_2 \circ \channAMF_2\right)\circ\channAMF_1\left[\sigma_{Q_0 | F=1}\right]}_1\\
        \label{eq:NonIncreasingUnderChannelUsed}\leq &\frac{1}{2}\norm{\left(\channAMF_1-\mathcal{K}_1 \circ \channAMF_1\right)\left[\sigma_{Q_0 | F=1}\right]}_1 + \frac{1}{2}\norm{\left(\channAMF_2-\mathcal{K}_2 \circ \channAMF_2\right)\circ\channAMF_1\left[\sigma_{Q_0 | F=1}\right]}_1\\
        \nonumber \leq &\sup_{\sigma^{(1)}_{Q_{0}}}\frac{1}{2}\norm{\left(\channAMF_1-\mathcal{K}_1 \circ \channAMF_1\right)\left[\sigma^{(1)}_{Q_{0}}\right]}_1 + \sup_{\sigma^{(2)}_{Q_{1}}}\frac{1}{2}\norm{\left(\channAMF_2-\mathcal{K}_2 \circ \channAMF_2\right)\left[\sigma^{(2)}_{Q_{1}}\right]}_1\\
        \label{eq:EpsSecurityUsed} \leq &\esecure_1+ \esecure_2,
    \end{align}
    where Eq. (\ref{eq:triangleIneqUsed}) follows from the triangle inequality of the $1$-norm, Eq. (\ref{eq:NonIncreasingUnderChannelUsed}) follows from the fact that the $1$-norm is non-increasing under the action of channels, and Eq. (\ref{eq:EpsSecurityUsed}) follows from Eq. (\ref{eq:epsSecurityjDefinition}).
    The general case $$d\left( \channAMF_n \circ \dots \channAMF_1 \left[\sigma_{Q_0 | F=1}\right] \,,\, \mathcal{K}_n \circ \channAMF_n \circ \dots \mathcal{K}_1 \circ \channAMF_1 \left[\sigma_{Q_0 | F=1}\right] \right)\leq \sum_{j=1}^n \esecure \leq \etotal_n$$ follows by iterating the same argument $n$ times, yielding the claimed result.
    
    Note that the above proof also works identically if instead we set $\etotal_n = \max\left\{\ecert \,,\, \sum_{j=1}^n \esecure\right\}$, as claimed.
\end{proof}

\section{Adaptive protocols based on ``characterization'' rather than ``certification''}
\label{app:varproof}
\newcommand{\estreg}{\Lambda}
\newcommand{\estval}{\lambda}
\newcommand{\certset}{\Lambda}
\newcommand{\certsetval}{\lambda}

\newcommand{\channAMC}{\mathcal{E}^{\atk,\mdl,\estval}}
\newcommand{\channAMCF}{\overline{\mathcal{E}}^{\atk,\mdl,\estval}}

The results in this section are more succinctly described by introducing the concept of confidence regions, which include confidence intervals as a special case:
\begin{definition}\label{def:confregion}
Consider a probability distribution described by some tuple of parameters $\symvec{\param}$. 
A \term{confidence region} for $\symvec{\param}$ at confidence level $\eps$ is a (set-valued) random variable  $\regconf$ such that $\regconf$ contains $\symvec{\param}$ with probability at least $1-\eps$.
\end{definition}

With this in mind, we can more rigorously describe the process of how a characterization procedure should ``estimate'' the device parameters using confidence regions, followed by performing QKD protocol instances that explicitly depend on the obtained confidence regions.

\begin{restatable}{proposition}{varversion}\label{prop:varversion}
Consider a device which undergoes a characterization procedure followed by up to $\Nmax$ QKD protocol instances (which can depend on the output of the characterization procedure), such that the following criteria are satisfied:
\begin{enumerate}
\item For any device model $\mdl \in \universe$, with parameters $\symvec{\param}$, the characterization procedure outputs a confidence region for $\symvec{\param}$ at confidence level $\ecert$. We shall denote this confidence region as $\certset$, and in the subsequent statements we shall also freely treat it as a classical register, via the standard identification between random variables and classical registers.\footnote{For the purposes of this work, we shall suppose that $\estreg$ can only take values in a countable set, to avoid technical issues with uncountably-infinite sets. This should not be a significant restriction in practice, since real-life characterization procedures can typically only output values belonging to some countable set (for instance, finite-precision numbers of some form), even when estimating real-valued parameters.} 

\item Conditioned on the value $\estval$ stored in $\estreg$, the QKD protocol instances satisfy the following criterion: for any device model $\mdl$ with parameters $\symvec{\param} \in \certsetval$, 
the $j^\text{th}$ QKD protocol instance is $\esecure_j$-secure under device model $\mdl$ (Definition~\ref{def:modelsecure}). For full clarity, in the subsequent expressions we will instead write the channels in that definition as $\channAMC_j$, explicitly denoting their dependence on $\estval$ (since the protocols being performed can depend on $\estval$).
\end{enumerate}
Then for any $n \in\{1,2,\dots,\Nmax\}$, any state $\sigma_{Q_0 \estreg}$ at the start of the first protocol instance (after the characterization procedure), any attack $\atk$ and device model $\mdl \in \universe$, and any sequence of channels $\channF_j: \KA_j \KB_j E_j \to Q_j$ for $j\in\{1,2,\dots,n-1\}$, we have: \begin{align}\label{eq:varbound}
\begin{gathered}
\sum_{\estval} \Pr[\estreg=\estval]\; d\left( \channAMCF_n \circ \dots \channAMCF_1 \left[\sigma_{Q_0 | \estval}\right] \,,\, \mathcal{K}_n \circ \channAMCF_n \circ \dots \mathcal{K}_1 \circ \channAMCF_1 \left[\sigma_{Q_0 | \estval}\right] \right)\leq \etotal_n, \\
\text{with } \etotal_n = 
\ecert+ \sum_{j=1}^n \esecure_j ,
\end{gathered}
\end{align}
where for notational compactness we write $\channAMCF_j \defvar \channAMC_j \circ \channF_{j-1}$ (defining $\channF_0$ to be the identity channel on $Q_0$), and $\mathcal{K}_j$ is a channel that reads the length $\ell$ of the keys on registers $\KA_j \KB_j$ and replaces them with perfectly correlated uniform keys \emph{of that length}, i.e.~the state $\omega^\ell$ in~\eqref{eq:idealkeyj}.
\end{restatable}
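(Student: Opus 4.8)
The plan is to mirror the two-case structure of the proof of Proposition~\ref{prop:mainbound}, replacing the binary accept/reject split with a partition of the sum over confidence-region values $\estval$ according to whether the true parameter vector $\symvec{\param}$ lies inside the region. Throughout, $\mdl\in\universe$ and hence its parameters $\symvec{\param}$ are fixed, with the only randomness being in the output register $\estreg=\certset$; I write $d_\estval$ for the trace distance appearing in~\eqref{eq:varbound} at a given $\estval$. The single ingredient driving everything is criterion~1: since $\certset$ is a confidence region for $\symvec{\param}$ at level $\ecert$, the ``miss'' probability satisfies $\sum_{\estval:\,\symvec{\param}\notin\certsetval}\Pr[\estreg=\estval]=\Pr[\symvec{\param}\notin\certset]\leq\ecert$.

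First I would split the left-hand side of~\eqref{eq:varbound} as
\begin{align*}
\sum_{\estval}\Pr[\estreg=\estval]\,d_\estval=\sum_{\estval:\,\symvec{\param}\in\certsetval}\Pr[\estreg=\estval]\,d_\estval\;+\;\sum_{\estval:\,\symvec{\param}\notin\certsetval}\Pr[\estreg=\estval]\,d_\estval.
\end{align*}
For the second (``bad'') sum I would bound each $d_\estval\leq1$ and apply the miss-probability estimate above, giving a contribution of at most $\ecert$. For the first (``good'') sum I would fix any $\estval$ with $\symvec{\param}\in\certsetval$ and invoke criterion~2: conditioned on that $\estval$, each of the $n$ instances is $\esecure_j$-secure under $\mdl$, which is precisely the hypothesis needed to rerun the telescoping argument from the second case of Proposition~\ref{prop:mainbound} verbatim (inserting the channels $\mathcal{K}_j$ one at a time and repeatedly combining the triangle inequality with monotonicity of the trace distance under channels). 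This gives $d_\estval\leq\sum_{j=1}^n\esecure_j$ for every such $\estval$, and since the corresponding probabilities sum to at most $1$, the good sum is at most $\sum_{j=1}^n\esecure_j$. Adding the two contributions yields $\etotal_n=\ecert+\sum_{j=1}^n\esecure_j$.

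The only point requiring genuine care is verifying that the telescoping estimate of Proposition~\ref{prop:mainbound} transfers unchanged once $\estval$ is fixed, since here the maps $\channAMCF_j$ additionally carry an $\estval$-dependence. This is immediate: for each fixed $\estval$ the objects $\channAMCF_j$ are ordinary quantum channels, and neither the triangle inequality nor the contractivity of the trace distance is affected by the extra label. I therefore do not anticipate any real obstacle; the adaptivity is fully absorbed by conditioning on $\estval$ before applying the security criterion, and the confidence-region guarantee cleanly controls the leftover ``bad'' mass by $\ecert$, exactly as $\Pr[F=1]\leq\ecert$ did in the non-adaptive setting.
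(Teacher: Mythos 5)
Your proposal is correct and follows essentially the same route as the paper's own proof: partition the sum over $\estval$ according to whether $\symvec{\param}\in\certsetval$, bound the ``miss'' mass by $\ecert$ via the confidence-region definition with each trace distance at most $1$, and bound the ``good'' terms by $\sum_{j=1}^n\esecure_j$ by rerunning the telescoping argument of Proposition~\ref{prop:mainbound} for each fixed $\estval$. Your added remark that the $\estval$-dependence of the channels is harmless once $\estval$ is fixed is exactly the observation the paper relies on implicitly.
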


Note that the above description implicitly incorporates as a special case the possibility that some outcome values from the characterization procedure lead to the devices being ``rejected'' for use in QKD, simply by having the subsequent ``QKD protocols'' all output zero-length keys (which trivially fulfill the $\esecure_j$-security criterion). Also,
similar to the discussion in Sec.~\ref{subsec:mainresult}, the bound~\eqref{eq:varbound} can be equivalently rewritten as
\begin{align}
d\left( \rho_{\KA_n \KB_n E_n \estreg} \,,\, \widetilde{\rho}_{\KA_n \KB_n E_n \estreg} \right)\leq \etotal_n,
\end{align}
where
\begin{align}
&\rho_{\KA_n \KB_n E_n \estreg} = \sum_{\estval} \Pr[\estreg=\estval]\;\channAMCF_n \circ \dots \channAMCF_1 \left[\sigma_{Q_0 | \estval}\right] \otimes \pure{\estval},\\
&\widetilde{\rho}_{\KA_n \KB_n E_n \estreg} = \sum_{\estval} \Pr[\estreg=\estval]\; \mathcal{K}_n \circ \channAMCF_n \circ \dots \mathcal{K}_1 \circ \channAMCF_1 \left[\sigma_{Q_0 | \estval}\right]  \otimes \pure{\estval}.
\end{align}
Viewed this way, Proposition~\ref{prop:varversion} is essentially stating that the \emph{mixture} of states produced by averaging over the outcomes of the characterization procedure (i.e.~$\rho_{\KA_n \KB_n E_n \estreg}$) is close in trace distance to a somewhat more ``ideal'' state $\widetilde{\rho}_{\KA_n \KB_n E_n \estreg}$,  which is similar to $\rho_{\KA_n \KB_n E_n \estreg}$ except that the key registers $\KA_j \KB_j$ are replaced by ideal keys of the same length immediately after each protocol instance.

We now present the proof of Proposition~\ref{prop:varversion}, which uses essentially the same ideas as the variable-length security proof in~\cite{tupkary2024security}.
\begin{proof}
Consider any device model $\mdl \in \universe$, with corresponding parameters $\symvec{\param}$.
Partition the summation in~\eqref{eq:varbound} into summations over the $\estval$ values such that $\symvec{\param} \notin \certsetval$, and the $\estval$ values such that $\symvec{\param} \in \certsetval$.
For the former, we have
\begin{align}
& \sum_{\estval \,\suchthat\, \symvec{\param} \notin \certsetval} \Pr[\estreg=\estval]\; d\left( \channAMCF_n \circ \dots \channAMCF_1 \left[\sigma_{Q_0 | \estval}\right] \,,\, \mathcal{K}_n \circ \channAMCF_n \circ \dots \mathcal{K}_1 \circ \channAMCF_1 \left[\sigma_{Q_0 | \estval}\right] \right) \nonumber\\
\leq& \sum_{\estval \,\suchthat\, \symvec{\param} \notin \certsetval} \Pr[\estreg=\estval] \nonumber\\
\leq& \ecert,
\end{align}
where the last line holds by recalling that by hypothesis, $\certset$ is a confidence region for $\symvec{\param}$ at confidence level $\ecert$.  
For the latter, we have
\begin{align}&\sum_{\estval \,\suchthat\, \symvec{\param} \in \certsetval} \Pr[\estreg=\estval]\; d\left( \channAMCF_n \circ \dots \channAMCF_1 \left[\sigma_{Q_0 | \estval}\right] \,,\, \mathcal{K}_n \circ \channAMCF_n \circ \dots \mathcal{K}_1 \circ \channAMCF_1 \left[\sigma_{Q_0 | \estval}\right] \right) \nonumber\\
\leq& \sum_{j=1}^n \esecure_j ,
\end{align}
by the same reasoning as in Appendix~\ref{app:mainproof}, noting that for each $\estval$ value in the sum, the trace-distance term is evaluated for QKD protocol instances that are $\esecure_j$-secure under this device model $\mdl$ by hypothesis (since $\symvec{\param} \in \certsetval$). Adding these bounds yields the claimed final result.
\end{proof}

\section{Composability under the Abstract Cryptography framework}
\label{app:AC}

This appendix is written assuming reader familiarity with the Abstract Cryptography (AC) framework presented in~\cite{PR22}. (An earlier version of the framework can be found in~\cite{MR11}, but there are some differences in the formalism.)

We shall discuss what can be stated in the AC framework (with respect to the version described in~\cite{PR22}) regarding an ``overall'' protocol consisting of a certification procedure that either approves or rejects a device, followed by $\Nmax$ QKD protocol instances if it approves. 
For this discussion, we shall require that when the certification procedure rejects, the devices are never used for any QKD protocols.

\begin{remark}
Ideally, one might wish to instead prove some sort of composable security statement regarding the certification procedure as a ``standalone'' protocol, then combine it with security statements for the individual QKD protocol instances. However, it appears difficult to define a suitable notion of an ideal resource to construct from the certification procedure, and we were unable to satisfactorily resolve this question. We hence leave this possibility for future work.
\end{remark}

We first recall that in the AC framework, one aims to make statements that a protocol constructs some resource from another resource within some $\eps$. To connect this to the formalism of device models described in Section \ref{sec:framework}, let us impose the following condition on device models, focusing on prepare-and-measure QKD (the modifications for entanglement-based or measurement-device-independent QKD are straightforward, by changing which parties hold preparation and/or measurement devices). {We note that the statement of this condition is slightly informal, as it is not entirely clear whether the resources outlined within it can be completely formalized within the existing AC framework --- however, we believe that some version of such formalization should be possible, and hence the remainder of this appendix will proceed under the assumption that this can be done.}
\begin{condition} \label{cond:ACresource}
We suppose that for every device model $\mdl\in\universe$, it is possible to define a corresponding pair of resources in the AC framework, namely a state preparation device for Alice that can repeatedly accept a classical input and generate a state based on the input (and possibly some memory it stores over time), and a measurement device for Bob that can repeatedly accept a classical input and a quantum system, then outputs the outcome of some measurement on that quantum system depending on the classical input (and possibly some memory it stores over time).\footnote{Strictly speaking, if the certification procedure involves interacting with these devices in some more general fashion, then these resources also need to be further generalized in order to allow such operations, but the overall conclusion remains.} 
We additionally require these resources to have an operation that ``clears'' any memory in the devices.
\end{condition}

Note that at the practical level, the ``memory clearing'' operation may be as simple as leaving the devices unused for some period of time, under the physical assumption that memory effects are sufficiently weak that they will dissipate given enough time. We impose this requirement simply as an approach to address the issues discussed in Remark~\ref{remark:memory}, though it might be possible to explore in future work whether there are other ways to overcome the issue.

\begin{remark}\label{remark:modelquantifier}
As a minor variant of Condition~\ref{cond:ACresource}, it might be possible to instead attempt to simply \emph{identify} the models with those resources, since at the typical level of abstraction in the AC framework, this would also suffice to capture all properties relevant to the subsequent discussion. Yet another possible variant might be to instead attempt to define a somewhat more general resource that can accept as input some specification of a device model $\mdl\in\universe$ from Eve, and then gives Alice and Bob various state preparation and measurement functionalities depending on this specification. However, we avoid doing so here in order to avoid needing to formalize the models $\mdl\in\universe$ in some fashion compatible with being inputs to a resource in the AC framework; instead, we couch our subsequent security statements in terms of a quantifier over all models $\mdl\in\universe$. (This is analogous to the discussion in~\cite[Sec.~VI.D]{PR22} for device-independent QKD, regarding whether the security statement is quantified over all device behaviours, as compared to allowing the devices to be ``universal computers'' that can have their behaviour determined by an input from Eve.)
\end{remark}

Having established this, we now need to specify the starting resources and the resource we aim to construct, for both honest and dishonest Eve~\cite{PR22}. Following the above discussion, we view these resources as being implicitly defined as functions of models $\mdl\in\universe$, and we will later present our security statements in terms of a quantifier over $\mdl\in\universe$.
First considering dishonest Eve, we take the starting resources $\mathcal{R}_\mathrm{ini}$ to be, similarly to~\cite{PR22}:
\begin{itemize}
\item An insecure quantum channel between Alice and Bob.
\item An authenticated classical channel between Alice and Bob.
\item The resource described above in Condition~\ref{cond:ACresource} (implicitly depending on the model $\mdl$).
\end{itemize}
As for the resource we aim to construct, we define it as follows (essentially a multiple-key version of the secret key resource in~\cite[Fig.~18]{PR22}), given an increasing sequence of times $t_j$ for $j\in\{1,2,\dots,\Nmax\}$:
\begin{itemize}
\item $\mathcal{R}_\mathrm{ideal}$ is a resource where at each time $t_j$, it accepts a classical input value $\ell_j \in \mathbb{N}$ from Eve's interface, and then outputs perfect shared secret keys of length $\ell_j$ to Alice and Bob's interfaces.
\end{itemize}

With this in mind, we can define the protocol we are considering, performed using the starting resources $\mathcal{R}_\mathrm{ini}$. In order to minimize difficulties regarding composability, we shall suppose that the devices (as defined in Condition~\ref{cond:ACresource}) are not used for any purposes other than this protocol --- it might be possible to do so if the device memories are cleared appropriately, but we shall not consider this in depth here.
\begin{itemize}
\item First, a certification procedure is performed on $\mathcal{R}_\mathrm{ini}$, generating either an approve or reject decision that is used to determine the next action.
\item If the certification procedure approves, the resources $\mathcal{R}_\mathrm{ini}$ are then used for $\Nmax$ QKD protocol instances that output keys at the times $t_j$, under the constraint after each QKD protocol instance, \emph{the device memories are cleared before they are used again}. If the certification procedure rejects, the devices are discarded, which we shall formalize by having the protocol output zero-length keys at all the times $t_j$.
\end{itemize}
In the AC framework as described in~\cite{PR22}, the above protocol would be a converter applied to Alice and Bob's interfaces in $\mathcal{R}_\mathrm{ini}$, with the inner interface connecting to $\mathcal{R}_\mathrm{ini}$, and the outer interface producing keys at times $t_j$. (Note that the approve/reject decision from the certification step is to be interpreted as an ``internal process'' of the protocol and is not released at any of the interfaces.) 

With this, we now state our main claim {under the assumption that Condition~\ref{cond:ACresource} can be properly formalized, followed by an outline of how it would be proven given a suitable formalization of that condition}. Since distinguishability in the AC framework is always to be considered with respect to some class of distinguishers, we phrase our following result as being implicitly in terms of the class of distinguishers. Also, as discussed in Remark~\ref{remark:modelquantifier}, we simply state the result with a quantifier over all models $\mdl\in\universe$, rather than attempting to build the dependence on the model $\mdl$ ``into'' the resource in some fashion.
\begin{proposition}
Consider any device model $\mdl\in\universe$ satisfying Condition~\ref{cond:ACresource} and take any increasing sequence of times $t_j$. Let $\mathcal{R}_\mathrm{ini}$ and $\mathcal{R}_\mathrm{ideal}$ be the resources defined above, for the case of dishonest Eve, and consider the protocol defined above. 
Suppose furthermore that this protocol satisfies the two criteria in Proposition~\ref{prop:mainbound} for some class of attacks $\atk$, and define 
\begin{align}
\etotal \defvar \ecert+ \sum_{j=1}^{\Nmax} \esecure_j .
\end{align}
Then this protocol constructs the resource $\mathcal{R}_\mathrm{ideal}$ from $\mathcal{R}_\mathrm{ini}$ within $\etotal$ in the sense of~\cite[Definition~1]{PR22}, as long as the class of attacks $\atk$ is large enough to describe all distinguisher strategies considered in that definition. 
\end{proposition}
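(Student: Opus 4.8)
The plan is to reduce the Abstract Cryptography construction claim to the trace-distance bound already established in Proposition~\ref{prop:mainbound}. Recall that in the AC framework of~\cite{PR22}, the statement that the protocol $\pi$ constructs $\mathcal{R}_\mathrm{ideal}$ from $\mathcal{R}_\mathrm{ini}$ within $\etotal$ requires exhibiting a simulator $\sigma_E$ attached to Eve's interface of $\mathcal{R}_\mathrm{ideal}$ such that no distinguisher in the allowed class can tell apart the real system $\pi \mathcal{R}_\mathrm{ini}$ from the ideal system $\sigma_E \mathcal{R}_\mathrm{ideal}$ with advantage exceeding $\etotal$. First I would fix an arbitrary model $\mdl\in\universe$ and observe that once we identify a distinguisher with an attack $\atk$ (which is exactly what the final hypothesis about the class of attacks being ``large enough'' guarantees), each side of the AC indistinguishability game produces a concrete quantum state on the combined Alice--Bob--Eve registers, and the distinguishing advantage equals the trace distance between these two states.

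The central construction is the simulator. Following the approach of~\cite{tupkary2024security}, I would have $\sigma_E$ internally run the certification procedure and the QKD protocol instances exactly as the honest protocol does, reading off the key lengths $\ell_j$ produced at each step and feeding them into the $\mathcal{R}_\mathrm{ideal}$ interface so that $\mathcal{R}_\mathrm{ideal}$ emits perfect keys of the matching lengths to Alice and Bob. The simulator forwards all of Eve's side-information registers $E_j$ honestly. With this choice, the ideal system $\sigma_E \mathcal{R}_\mathrm{ideal}$ is precisely the state $\widetilde{\rho}_{\KA_n \KB_n E_n F}$ from~\eqref{eq:finalrhoideal} with $n=\Nmax$, in which the real keys on $\KA_j\KB_j$ are replaced by ideal keys of the same length immediately after each instance, while the real system $\pi\mathcal{R}_\mathrm{ini}$ is exactly $\rho_{\KA_n \KB_n E_n F}$ from~\eqref{eq:finalrho}. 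The reject branch contributes identically on both sides because, by the standing assumption, the devices are never used when certification rejects and both systems simply output zero-length keys, so these components coincide and drop out of the trace distance.

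Having matched the two systems to $\rho$ and $\widetilde{\rho}$, I would invoke Proposition~\ref{prop:mainbound} directly: it gives $d(\rho_{\KA_n \KB_n E_n F}, \widetilde{\rho}_{\KA_n \KB_n E_n F}) \leq \etotal$ for every model $\mdl\in\universe$, since the two criteria of that proposition are satisfied by hypothesis. Because the distinguishing advantage is upper bounded by this trace distance, and because the bound is uniform over $\mdl\in\universe$ and over all attacks $\atk$ in the class (hence over all admissible distinguisher strategies), the construction holds within $\etotal$ as required. The key reduction step is the observation that the memory-clearing requirement built into the protocol, together with Condition~\ref{cond:ACresource}, is exactly what licenses writing the real system in the iterated channel form of~\eqref{eq:state_n} so that Proposition~\ref{prop:mainbound} applies.

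The main obstacle is not the trace-distance estimate, which is essentially immediate from the earlier result, but rather the \emph{formalization} issue already flagged in the statement: namely, verifying that the resources of Condition~\ref{cond:ACresource}, the simulator, and the protocol converter can all be genuinely expressed as objects in the AC framework, and in particular that the identification of distinguishers with attacks $\atk$ is faithful (so that the supremum of distinguishing advantage is indeed captured by the class of $\atk$ over which Definition~\ref{def:modelsecure} quantifies). The memory-register subtleties from Remark~\ref{remark:memory} re-enter here, since the decoupling of the device memory across instances is what makes the honest simulation above well-defined; establishing this rigorously is precisely the part the paper defers, so I would present the argument as an outline conditional on Condition~\ref{cond:ACresource} being properly formalizable, exactly as the surrounding text signals.
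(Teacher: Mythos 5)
Your proposal is correct and takes essentially the same approach as the paper: the same simulator (internally running $\pi_{AB}\mathcal{R}_\mathrm{ini}$ and feeding the internal key lengths to $\mathcal{R}_\mathrm{ideal}$, with zero lengths on rejection), the same identification of the real and ideal systems with the states $\rho$ and $\widetilde{\rho}$ so that Proposition~\ref{prop:mainbound} supplies the trace-distance bound, and the same caveats about formalizing Condition~\ref{cond:ACresource} and the attack class covering all distinguishers. The only difference is structural bookkeeping: the paper applies Proposition~\ref{prop:mainbound} iteratively at each output time $t_n$ (using the memory-clearing assumption between instances to justify that the accumulated registers capture the full state, with CPTP processing in between not increasing trace distance), whereas you apply it once at $n=\Nmax$ to the final states; at the outline level of rigor both arguments share, this is an inessential difference.
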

\begin{proof}[Proof outline]
The idea is basically identical to~\cite[Sec.~III.B.2]{PR22}; here we outline the main structure. By definition, we need to construct a suitable simulator $\sigma_E$ to attach to Eve's interface in $\mathcal{R}_\mathrm{ideal}$, such that $\mathcal{R}_\mathrm{ideal} \sigma_E$ is $\etotal$-indistinguishable from $\pi_{AB}\mathcal{R}_\mathrm{ini}$ where $\pi_{AB}$ denotes the protocol described above performed by Alice and Bob. We follow a basically identical construction to~\cite[Sec.~III.B.2]{PR22} for this simulator. Specifically, we define it to simply run an ``internal instance'' of $\pi_{AB}\mathcal{R}_\mathrm{ini}$, so its ``outer'' interface interacts accordingly with Eve. As for how it interacts on its ``inner'' interface (i.e.~the Eve interface on $\mathcal{R}_\mathrm{ideal}$), we define it as follows: for each time $t_j$, if the certification step rejected in the ``internal instance'' of $\pi_{AB}\mathcal{R}_\mathrm{ini}$, then the simulator simply sets $\ell_j=0$, otherwise (if the certification step approved) the simulator sets $\ell_j$ equal to the length of the key produced from the $j^\text{th}$ QKD protocol in the ``internal instance'' of $\pi_{AB}\mathcal{R}_\mathrm{ini}$.

Observe that $\pi_{AB}\mathcal{R}_\mathrm{ini}$ is perfectly indistinguishable from $\mathcal{R}_\mathrm{ideal} \sigma_E$ up until time $t_1$ (as also noted in~\cite[Sec.~III.B.2]{PR22}); there are absolutely no differences in their behaviour up to that point. At time $t_1$, the first potential discrepancy occurs, since $\pi_{AB}\mathcal{R}_\mathrm{ini}$ outputs the actual key at Alice and Bob's interfaces, whereas $\mathcal{R}_\mathrm{ideal} \sigma_E$ outputs a ``replaced'' version of the key. Let us write the registers held by the distinguisher at that point as $\KA_1 \KB_1 E_1$, where $E_1$ denotes all registers held by the distinguisher just before $\KA_1 \KB_1$ were released. 
With this, observe that the states held by the distinguisher at this time in the two scenarios are respectively the reduced states (on $\KA_1 \KB_1 E_1$) of states with the form
\begin{align}
\begin{aligned}
\rho_{\KA_1 \KB_1 E_1 F} &= \Pr[F=0]\, \rho_{\KA_1 \KB_1 E_1 | F=0} \otimes \pure{0}_F +  \Pr[F=1]\, \channAM_1 \left[\sigma_{Q_0 | F=1}\right] \otimes \pure{1}_F, \\
\widetilde{\rho}_{\KA_1 \KB_1 E_1 F} &= 
\mathcal{K}_1 \left[\rho_{\KA_1 \KB_1 E_1 F}\right] \\
&= \Pr[F=0]\, 
\rho_{\KA_1 \KB_1 E_1 | F=0} \otimes \pure{0}_F +  \Pr[F=1]\, \mathcal{K}_1 \circ \channAM_1 \left[\sigma_{Q_0 | F=1}\right] \otimes \pure{1}_F,
\end{aligned}
\end{align}
where $\mathcal{K}_1, \channAM_1, \sigma_{Q_0 F}$ are as in the Proposition~\ref{prop:mainbound} notation (for some $\atk$ corresponding to the distinguisher's strategy), and to obtain the last equality we recall that in the conditional state $\rho_{\KA_1 \KB_1 E_1 | F=0}$, the key registers $\KA_1 \KB_1$ are fixed to zero-length keys, and therefore $\mathcal{K}_1$ leaves it unchanged. 
Furthermore, by Proposition~\ref{prop:mainbound} we have
\begin{align}
\begin{gathered}
\Pr[F=1]\; d\left( \channAM_1 \left[\sigma_{Q_0 | F=1}\right] \,,\, \mathcal{K}_1 \circ \channAM_1 \left[\sigma_{Q_0 | F=1}\right] \right)\leq \ecert+ \esecure_1 ,
\end{gathered}
\end{align}
and thus we have
\begin{align}\label{eq:t1dist}
d\left( \rho_{\KA_1 \KB_1 E_1 F} \,,\, \widetilde{\rho}_{\KA_1 \KB_1 E_1 F} \right) \leq \ecert+ \esecure_1 \leq \etotal,
\end{align}
so the states held by the distinguisher at time $t_1$ (on $\KA_1 \KB_1 E_1$) are $\etotal$-indistinguishable. 

Proceeding on to later times, recall we suppose that after the first QKD protocol finishes, the devices have their memories cleared before they are used again in any form. Therefore, without loss of generality we can suppose any memory registers in the devices are traced out immediately after $t_1$ (since they cannot have any operational impact if the devices are not used before these memories are cleared). Under this picture, observe that the registers $\KA_1 \KB_1 E_1 F$ suffice to capture the \emph{entire} state of the resources and the distinguisher at that point. As we have shown above in~\eqref{eq:t1dist}, the states on these registers are $\etotal$-indistinguishable in the two scenarios. From this, we can conclude that up until time $t_2$ (when the keys from the second QKD protocol are produced), the distinguishing advantage remains at most $\etotal$, because all further states held by the distinguisher up until that time can then be described via some completely positive, trace-preserving (CPTP) map acting on those states, which does not increase trace distance. 

Then for the states at time $t_2$, we apply the same argument structure as above to conclude the trace distance is again at most $\ecert+ \esecure_1 + \esecure_2 \leq \etotal$. Continuing similarly for the remainder of the protocol, we conclude $\pi_{AB}\mathcal{R}_\mathrm{ini}$ is and $\mathcal{R}_\mathrm{ideal} \sigma_E$ are always $\etotal$-indistinguishable, yielding the desired result.
\end{proof}

\begin{remark}
The ``adaptive'' device-characterization version described in Appendix~\ref{app:varproof} would also construct the same resource within $\etotal$ (under the same informal conditions discussed above), via an analogous argument. Specifically, rather than having the simulator set $\ell_j=0$ for all $j$ whenever the certification step aborts, we simply have it supply the length $\ell_j$ of the key  produced from the $j^\text{th}$ QKD protocol in the ``internal instance'' of $\pi_{AB}\mathcal{R}_\mathrm{ini}$ (in this version, these lengths will also implicitly depend on the outcome from the certification step). The rest of the proof outline carries through in the same manner.
\end{remark}

We now turn to the case where Eve is honest, as is required for a full analysis under the AC framework. The following discussion is to be understood to apply both for the device certification procedure in Section \ref{sec:framework} and the  ``adaptive'' device-characterization version described in Appendix~\ref{app:varproof}, as the same claims apply to both. Again following~\cite{PR22}, we take the starting resources to be the same as above, except that the insecure quantum channel outputs nothing on Eve's interface and is taken to have some fixed ``honest'' noise level between Alice and Bob. Similarly, the resource we aim to construct is one that outputs nothing at Eve's interface, and at each time $t_j$, it outputs a mixture of perfect shared secret keys of different lengths to Alice and Bob's interfaces, such that the distribution of lengths matches that which would be obtained from the honest behaviour of the insecure quantum channel. The analysis of this case is fairly straightforward, following the same arguments in~\cite[Sec.~III.B.4]{PR22}; we again conclude that for any $\mdl\in\universe$ the protocol indeed constructs the claimed resource within $\etotal$. Note however that there might be many $\mdl\in\universe$ such that the resulting constructed resource in this case is not of practical interest (e.g.~one that almost never produces any keys, because for that model $\mdl$ the certification procedure almost always rejects). Hence if desired, we can instead focus only on a specific model $\mdl_\mathrm{hon}\in\universe$ corresponding to some ``honest'' device model that yields a practically interesting final resource.

\end{document}